\tikzset{>=latex}
\tikzset{%
block/.style    = {draw, thick, rectangle, minimum height = 3em,
    minimum width = 3em},
  block1/.style    = {draw, thick, rectangle, minimum height = 1.7em,
    minimum width = 1.7em,fill=gray!70},
      block2/.style    = {draw, thick, rectangle, minimum height = 1.7em,
    minimum width = 1.7em,fill=gray!30},
  sum/.style      = {draw, circle, node distance = 1.8cm}, 
}
\title{Dynamics--Decoupling  Control  for Strings \\ of Heterogenous Nonlinear Autonomous Agents
}
\author{\c{S}erban Sab\u{a}u$^\sharp$, \thanks{$^\sharp$\c{S}erban Sab\u{a}u
is with the Electrical and Computer Engineering Dept.,
Stevens Institute of Technology, Hoboken, New Jersey, U.S.A.
email: {\tt ssabau@stevens.edu}}
Irinel--Constantin Mor\u{a}rescu$^\star$\thanks{$^\star$ I.-C. Mor\u{a}rescu is with Universit\'{e} de Lorraine, CRAN, UMR 7039 and CNRS, CRAN, UMR 7039, 2 Avenue de la For\^{e}t de Haye, 54506 Vand\oe uvre-l\`{e}s-Nancy, France. email: {\tt constantin.morarescu@univ-lorraine.fr}}, Lucian Bu\c{s}oniu$^\ddagger$\thanks{$^\ddagger$ L.~Bu\c{s}oniu is with Technical University of Cluj--Napoca, Romania. email:{\tt lucian@busoniu.net}}  \\
and Ali Jadbabaie$^\dagger$\thanks{$^\dagger$ Ali Jadbabaie is with the Institute for Data Systems and Society and the Laboratory for Information and Decision Systems, Massachusetts Institute of Technology. email: {\tt jadbabai@mit.edu.}}\thanks{The work of I.C. Mor\u{a}rescu was partially funded by ANR project Computation Aware Control Systems (COMPACS) and the project NETASSIST funded by the Agence Universitaire de la Francophonie (AUF) and the Romanian Institute for Atomic Physics (IFA)}}
\newtheorem{theorem}{Theorem}[section]
\newtheorem{rem}[theorem]{Remark}
\newtheorem{lem}[theorem]{Lemma} 
\newtheorem{prop}[theorem]{Proposition}
\newtheorem{defn}[theorem]{Definition}
\newtheorem{assumption}[theorem]{Assumption}
\newcommand{\ba}{\left[ \begin{array}}
\newcommand{\baa}{\begin{array}}
\newcommand{\ea}{\end{array} \right]}
\newcommand{\eaa}{\end{array}}
\newcommand{\be}{\begin{equation}}
\newcommand{\ee}{\end{equation}}
\newcommand{\bb}{\begin{equation}\label}
\newcommand{\figref}[1]{Figure~\ref{#1}}
\renewcommand{\tilde}{\widetilde}
\newcommand{\FF}{{{\rm I \kern -0.2em R}}}
\newcommand{\RR}{{{\rm I \kern -0.2em R}}}
\newcommand{\CC}{{{\mbox{\rm \hspace*{0.05ex}
\rule[.18ex]{.18ex}{1.24ex} \kern -.65em C}}}} 
\newcommand{\bea}{\begin{eqnarray}}
\newcommand{\eea}{\end{eqnarray}}
\newcommand{\R}{{\bf R}}
\begin{document}
\maketitle

\begin{abstract}  We introduce a distributed control architecture for a class of heterogeneous, nonlinear dynamical agents moving in the ``string'' formation, while guaranteeing trajectory tracking, collision avoidance and the preservation of the formation's topology. Each autonomous agent uses  information and  relative measurements only with respect to its predecessor in the string. The performance of the scheme is independent of the number of agents in the network and also on the agent's relative position in the network. The scalability is a consequence of the ``decoupling''  of a certain bounded approximation of the closed--loop equations, which allows the regulation and controller design (at each agent) to be done individually, in a completely decentralized manner. A practical method for compensating communication induced delays is also presented. Numerical examples illustrate the effectiveness and the main features of the proposed approach.
\end{abstract}


\section{Introduction}


Practical algorithms for distributed control of dynamically coupled systems  are needed in many diverse applications raging from formation control of autonomous mobile agents \cite{Ali,PATRU}, synchronization of local clocks offsets or phase differences between (neighboring) coupled oscillators \cite{Mallada}  or synchronous generators in power networks, sensor networks, load balancing \cite{SASE}, distributed agreement algorithms, cooperative control of multi-robot systems \cite{CINCI}, opinion dynamics {\em etc.}
In the specific setting of  autonomous agents, the intricacies of dynamical coupling are not caused by the structure of the plant but rather by: {\em (i)} the structure of  the cost functional resulting from the definition of the regulated measurements ({\em e.g.} in formation control - the inter-agent spacing distances defining the topology of the formation)  and {\em (ii)} the coupling induced in the entailing feedback loop (with the distributed controller). The subsequent controller design problem is further complicated by the constraints  to be imposed 
on the sensing and communications radii\footnote{By communications radius we mean the number of hops associated to an agent in the communications graph. The communications graph designates for each agent the neighbors with which it is able to communicate.} of each agent.

The goal of the distributed control scheme is for all the agents to attain certain types of  global collaborative behavior, such as their outputs/states reaching agreement in a precisely defined metric. In modern control parlance this class of objectives have been dubbed consensus \cite{OPT, NOUA} and synchronization \cite{ZECE, UNSPE, DOISPE, TREISPE, PAISPE} problems.  In the existing literature there is no clear demarcation line for this terminology (consensus versus synchronization), therefore we refer to \cite[Section~1]{SAPTE}, \cite[Section~1]{Rick2016} for a useful discussion. It is worth mentioning that synchronization is a far more ambitious objective than classical reference tracking \cite[Chapter~5.4]{Kwakernaak}, not only due to the distributed nature of the problem but also because the reference signals are never explicitly available to all agents, while in certain scenarios the synchronization trajectory is not even assigned beforehand and therefore there is no explicit reference to be tracked \cite{Rick2016}.

 In this paper we deal with a group of heterogenous, nonlinear dynamical agents that solve a conventional agreement task (velocities matching in our case) but to which we append a specific set of constraints 
linking the individual states of two adjacent agents  (in our case their positions in space). It turns out that the inclusion of such constraints render existing methods in distributed synchronization ({\em e.g.} distributed output regulation) inapplicable directly. The constraints relate to the inter-spacing distance between two agents, defined as the difference of their individual positions in space. In distance-based formation control\footnote{When referring to the position in space of an autonomous agent within a group of agents, the position must be defined with respect to an inertial system of reference common to all agents, {\em e.g.} a Global Positioning System. Note that the methods described here 
do not employ positioning systems, relying exclusively on measurements of {\em relative distances} between agents (acquired using for example onboard lidars).} such constraints on the  inter-spacing distances between neighboring agents arise naturally when: {\em (i)} defining the formation's steady-state topology and  {\em (ii)} when framing the collision avoidance requirement, via restrictions on inter-agent distances in the transitory regime.  For illustrative simplicity we will only look at the string graph, while our method handles the heterogeneity of the formation and the inter-agent communications time delays, overlooking applications in the automotive industry. For consistency, we will refer to this setting as a synchronization problem.

 Existing results in distributed synchronization (or distributed agreement) for Linear and Time Invariant  ({\bf LTI}) dynamics rely on observer-based distributed controllers \cite{SAPTE, CINSPE, SAISPE}, while the results from \cite{SAPTISPE, OPTISPE, NOUASPE} remove the necessity of communicating the internal states of the local observers/sub-controllers and rely solely on the communication of the agents' outputs. The results \cite{2ZECI, 2ZECISIUNU, 2ZECISIDOI} for the heterogenous case, rely on the existence of a (virtual) exo-system  generating the reference trajectory, while \cite{2ZECISITREI} outlines the intrinsic connections between the set of possible agreement trajectories and the sharing of all agents of certain ``common dynamics''. The authors' recent results  in \cite{TAC2016} provide a solution in the more ambitious setting of a distributed $\mathcal{H}_2 / \mathcal{H}_\infty$  disturbances attenuation problem for the string graph, encompassing heterogenous agents and communications induced time delays. In this context, the current paper can be looked at as an extension to nonlinear control of the novel ideas for LTI dynamics from \cite{TAC2016}.

 Theoretical advancements for nonlinear agents  are in an incipient phase and have only been available more recently. The Lyapunov function approach in \cite{2ZECISISAPTE} is based on differential inequalities. The results in \cite{2ZECISIPATRU} pertaining to weakly minimum phase nonlinear agents are viable only under a passivity hypothesis, while those in \cite{2ZECISICINCI, 2ZECISISASE} pertain to globally Lipshitz-like conditions on the nonlinearities and leader-following networks. The reference \cite{2ZECISIOPT} brings forward a necessary condition but no controller synthesis procedure while in \cite{2ZECISINOUA} the agreement objective can only be set to a constant.
Very recent results applicable to more complicated nonlinear dynamics include feedforward schemes \cite{3ZECI} or are set up as cooperative output regulation problems {\em e.g.} \cite{3ZECISITREI, 3ZECISIPATRU, 3ZECISIUNU, 3ZECISIDOI} and the references within, among which \cite{3ZECISITREI, 3ZECISIPATRU} deal with leader-following networks.  A notable feature of  \cite{3ZECISIUNU, 3ZECISIDOI}  and especially \cite{Rick2016} is that the sub-controller corresponding to an agent can be designed independently to all other sub-controllers. The downside of \cite{3ZECISIUNU, 3ZECISIDOI} is the  requirement of  full state information exchange among agents, requirement entirely circumvented in \cite{Rick2016} in a generic setting.

\subsection{Motivation and Scope of Work}

 The references above deal with the standard setup in which agents must achieve agreement of certain, pre-specified variables from each agent's own state-space. For the class of problems treated in this paper, this takes the form of guaranteeing  {\em velocity matching} in the steady-state, irrespective of the velocity profile of the leader (whose trajectory represents the reference for the entire formation) and which is seen as an adversarial player. However, the problem statement is further complicated by the inclusion of constraints that impose a substantial ``coupling'' between individual state variables of distinct agents, where these states represent the positions of agents\footnote{See also footnote 2 on page one.}. These constraints cast on the relative distances between two neighboring agents render the  existing methods referred above inapplicable directly\footnote{Existing results  \cite{Baillieul, AndersonRigid, MorseRigid} on {\em graph rigidity} show how easy it is for these types of constraints to cause certain variations of this formulation of the distance-based formation control problem to become not well-posed. That happens when in an effort to preserve the topology of the formation, the controller encounters simultaneously conflicting  constraints.}. However, the constraints are needed in order to frame sufficient conditions for: {\em (i)} {\em collision avoidance} in the transitory regimes and {\em (ii)} {\em topology preservation} of the formation in steady-state ({\em i.e.} the interspacing distances between agents must converge asymptotically to certain pre-specified, constant values).

For high performance displacement-based formation control \cite[Section~6]{formsurv}, global positioning systems are not viable due to their relative large latencies and problematic reliability. The absence of a global coordinate system (combined with the fact that we avoid the use of accelerometers\footnote{Longitudinal accelerometers are notoriously unreliable for applications in the automotive industry.}) requires that the agents must rely only on real time measurements of {\em relative variables} with respect to their neighbors, with all the difficulties such schemes entail, including the fact that collision avoidance and topology preservation cannot be reduced to a cooperative, output regulation problem (as those referred above).

\subsection{Contributions of the Paper}

Our controller's architecture is borrowed from platoon control literature\footnote{The conceptual architecture behind such distributed control schemes  have been dubbed Cooperative Adaptive Cruise Control in the platooning control parlance \cite{Ploeg, TAC2016}.} and is conceptually different from the aforementioned methods ({\em e.g.} \cite{3ZECISITREI, 3ZECISIPATRU, Rick2016} and the references within). Unlike \cite{3ZECISITREI, 3ZECISIPATRU} it doesn't require exchange of internal states (plant internal states or controller states) among agents. In turn, each agent needs to transmit its {\em control action} only to its immediate follower in the string. Furthermore, the design of each sub-controller, (``local'' to an agent ) can be done in a completely independent manner - feature which is known to be especially challenging in distributed synchronization  (see \cite{Rick2016} and the references within for a comprehensive discussion in a related setting). Indeed, solely the knowledge of the dynamical model of the immediate predecessor is required for the local sub-controller at each agent, but once this is made available the regulation and controller design (at each agent) is done individually, in a completely decentralized manner.




Perhaps the most appealing feature of the proposed scheme is a particular dynamic ``decoupling''  of a certain bounded approximation of the closed--loop equations, entailing that individual, {\em local} analyses of the closed--loop stability at each agent will in turn guarantee the aggregated stability of the entire formation. This entails a  complete scalability with respect to: {\em (i)} the number of agents in the string and {\em (ii)} the same  performance irrespective of the relative position in formation (front or back of the string).

By comparison to our method, the main result in  \cite{SCL2012} is restricted to an undirected topology of the distributed controller, with stringent requirements  involving: {\em (i)} the transmission of the exact state of the leader to many agents in the formation ({\em virtual leaders}) and {\em (ii)} the necessity of high control gains (see the last paragraph in \cite[page~1]{Sabau2017} for a more detailed discussion).

Overall, our scheme improves on existing results in the following essential aspects:
\begin{enumerate}
\item The agent dynamics are permitted to be  heterogenous as long as they are nonlinear,  globally Lipschitz.

\item The agents achieve the synchronization of their velocities in the steady-state, while guaranteeing collision avoidance.

\item The scheme guarantees steady-state topology preservation. Very recent results \cite{Nader} are able to achieve this but only for identical single-integrators, exploiting an  adaptation of the Cucker-Smale type nonlinear controllers \cite{CuckerSmale2007}.  Collision avoidance is obtained in \cite{CuckerDong2011} for single-integrators but without topology preservation.


\item The distributed controller determines a ``dynamic decoupling'' of the closed loop, rendering the same performance independent of the number of agents or the relative position in formation.  (The Lyapunov function guaranteeing the closed-loop stability of the entire formation is actually the sum of ``local'' Lyapunov functions, proper to each agent. This decoupling is also the root cause of the feature stated at the next point.)

\item Completely independent regulation and controller design at each agent, under the sole requirement that each agent knows the dynamical model of its predecessor\footnote{This aspect is essential when dealing with merging/exiting of agents, since it allows only local reconfigurations (at the merging agent or at the follower of the exiting agent) without the need to reconfigure the control scheme for the entire formation.}.

\item We provide a simple, practical method for the efficient compensation of the notoriously detrimental (communications induced) time-delays \cite{Rick2014}, at the expense of a negligible loss in performance.
\end{enumerate}

\subsection{Paper Organization}
The paper is organized as follows: in Section~II we introduce the general framework and problem formulations. Section~III provides a preliminary description of the novel distributed control architecture introduced in this work along with a first glimpse at the closed-loop dynamics ``decoupling'' featured by the control scheme. Section~IV contains the main result as it delineates the guarantees for stability, velocity matching, collision avoidance and topology preservation. Finally, Section~V outlines a practical delays compensation mechanism while Section~VI provides an illustrative numerical example, worked out on an actual dynamical model for road vehicles.
\section{General Framework and Problem Statement}

The notation being used is fairly standard throughout the literature, for example the derivatives $\frac{d}{dt}{z}(t)$ with respect to the time variable are sometimes denoted by  $\dot{{ z}} (t)$. Also, throughout the paper it will become apparent from the context when the time argument $(t)$ is being omitted for the sake of brevity. The notation $a \overset{def}{=} b$ means that the left hand side quantity $a$ is defined to be the right hand side quantity $b$.

\begin{defn} The $\sigma$--norm  of a vector $x$ is defined as

\begin{equation}\label{nrm}
\| x \|_\sigma \overset{def}{=} \frac{1}{\sigma} \Big[ \sqrt{1+ \| x \|_2^2 }  -1 \Big]
\end{equation}
where $\sigma$ is a strictly positive constant. Note that (\ref{nrm}) is a class $\mathcal{K}_\infty$ function of $ \| x \|_2^2$ and is differentiable everywhere.
\end{defn}
\begin{defn} A set $\Omega$ is said to be {\em forward invariant} with respect to an equation, if any solution $x(t)$ of the equation satisfies:
$
x(0)\in \Omega \Longrightarrow x(t) \in \Omega, \: \forall t>0.
$
\end{defn}

\begin{defn} \label{apf} {Artificial Potential Function} {(APF)}.   The function $V_{k,k-1}(\cdot)$ is a class $C^1$, nonnegative, radially unbounded function of $\|z \|_\sigma$ 
satisfying the following properties:

{\bf (i)} $V_{k,k-1}(\|z \|_\sigma) \rightarrow \infty$ as $(\|z \|_\sigma) \rightarrow 0$,

{\bf (ii)} $V_{k,k-1}(\|z \|_\sigma)$ has a unique minimum, which is attained at $\|z \|_2=\delta_k$, with  $\delta_k$ being a  positive constant.
\end{defn}

\subsection{Distributed Trajectory Tracking in the String Formation}
We consider a {\em heterogeneous} group of $n+1$ agents ({\em e.g.} autonomous road vehicles) moving along the same (positive) direction of a roadway, with the origin at the starting point of the leader. The dynamical model for the agents, relating the control signal $u_k(t)$ of the $k$--th vehicle  to its position $y_k(t)$ is given by

\begin{subequations} \label{dacia}
\begin{equation} \label{nlin}
\dot{{ y}}_k(t)={ v}_k(t), \quad \dot{{ v}}_k(t)={ f_k}({ v}_k(t)) + { u}_k(t) \ ;
\end{equation}\vspace{-6mm}
\begin{equation} \label{initcond}
y_k(0)= -\sum_{j=0}^k \ell_j, \quad v_k(0)=0.
\end{equation}
\end{subequations}
where $v_k(t)$ is the instantaneous speed of the $k$--th agent, $u_k(t)$ is its command signal and $\ell_k$ is the initial interspacing distance between the $k$--th agent and its predecessor in the string. Throughout the sequel we will use the notation

\begin{equation} \label{IO}
y_k= G_k \star u_k
\end{equation}
to denote (especially for the graphical representations) the input--output operator $G_k$ of the dynamical system from (\ref{nlin}), with the initial conditions (\ref{initcond}).

\begin{assumption} \label{lider}
The index ``$0$'' is reserved for the {\em leader agent}, the first agent in the string. This situation leads to exactly $n$ inter-agent distances, which are part of the regulated measurements.   
\end{assumption}

 \begin{figure*}[!ht]
\hspace{-3mm}
\centering
\includegraphics[scale=0.85]{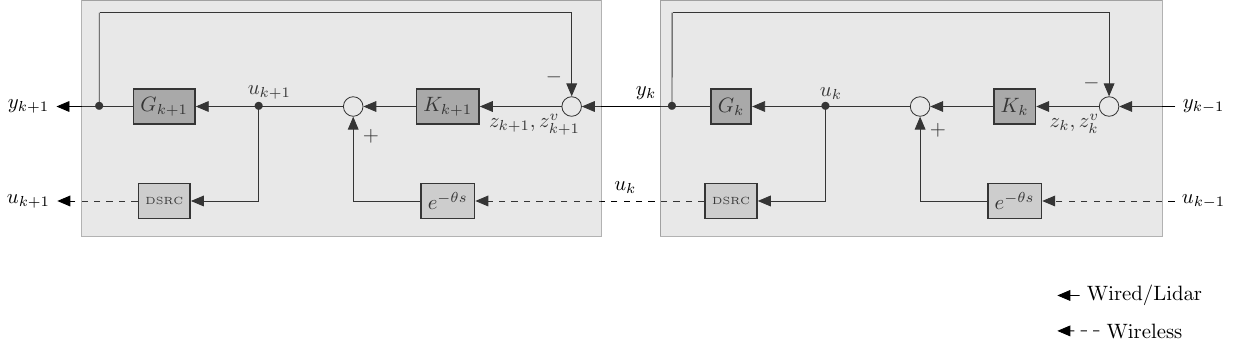}
\caption{Distributed Controller Implementation.} 
\label{f2}
\end{figure*}

In the rest of the paper it will become apparent from the context that we often omit the time argument $(t)$, for the sake of brevity.  Let us further define

\begin{equation} \label{z}
{{ z}}_k\overset{def}{=}{y}_{k-1}-{y}_k, \quad {{ z}}_k^v\overset{def}{=}{ v}_{k-1}-{ v}_k \quad \text{for} \quad 1 \leq k \leq n,
\end{equation}
to be the interspacing  and relative velocity error signals respectively (with respect to the predecessor in the string).  By differentiating the first equation in (\ref{z}) it follows that $\dot{{ z}}_k(t)={{ z}}_k^v(t)$, therefore implying that  constant interspacing errors (in steady state) are equivalent with zero relative velocity errors and also allowing to write the following  time evolution for the relative velocity error of the $k$--th vehicle

\begin{equation} \label{primo}
\dot{{ z}}_k^v={ f_{k-1}}({ v}_{k-1})-{ f_k}({ v}_k) + { u}_{k-1}-{ u}_k .
\end{equation}

\section{A Practical Distributed Control Architecture} \label{arqui}



After five decades of consistent academic efforts and hundreds of references on the subject, it turned out that  control of a string of mere double integrators might well be the epitome of the difficulties typical to distributed control, since it suffers from all pitfalls one might have expected from more general and complex dynamical networks, {\em e.g.} performance is in general dependant on the number of agents in the string and on their relative position in formation and is highly sensitive to communications delays.

We introduce a novel  control architecture featuring a highly beneficial ``decoupling'' property of the closed--loop dynamics, that resolves the troubling {\em nested} interdependencies of the regulated measurements.  We consider non--linear controllers built on the so-called Artificial Potential Functions {(APF)}, in particular we will look at control laws of the type

\begin{equation} \label{ours}
\begin{split}
{ u}_k &=  { u}_{k-1} + {\beta_k}  ({ v}_{k-1} -  { v}_k)   - \nabla_{{y}_k} V_{k,k-1}(\| {y}_{k-1} -{y}_k \|_\sigma)\\
&\quad \quad \quad \quad \quad \quad \quad \quad \quad -{f}_k({v}_k)+{f}_{k-1}({v}_k)
\end{split}\end{equation}
with $k \geq 1$, where each of the $V_{k,k-1}(\cdot)$ functions is an Artificial Potential Function \cite[Definition~7]{SCL2012}, with $\beta_k$ being a proportional gain to be designed for supplemental performance requirements. With the notation from (\ref{z}), the control policy (\ref{ours}) for the $k$--th agent becomes

\begin{equation}\label{oursbis}
\begin{split}
{ u}_k &=  { u}_{k-1} + {\beta_k}  z_k^v   - \nabla_{{y}_k} V_{k,k-1}(\| {z}_k \|_\sigma)\\
& \quad \quad \quad \quad \quad \quad \quad \quad \quad -{f}_k({v}_k)+{f}_{k-1}({v}_k).
\end{split}
\end{equation}
Note that the distributed control laws rely only on information locally available to each agent, since it can further be written as the sum of  the following two components: firstly, the control signal  $u_{k-1}(t)$ of the preceding agent, which is received onboard the $k$--th agent via wireless communications ({\em e.g.} digital radio) along with the  function ${f}_{k-1}(\cdot)$ characterizing the predecessor's  dynamical model. Secondly, the {\em local} component, which we denote with

\begin{equation} \label{FigureHelper}
u_{k}^\ell \overset{def}{=} {\beta_k} {z}_k^v   - \nabla_{{y}_k} V_{k,k-1}(\| {z}_k \|_\sigma)-{f}_k({v}_k)+{f}_{k-1}({v}_k)
\end{equation}
and which is based solely on a high accuracy speedometer for measuring $v_k(t)$ \footnote{For automotive applications, high accuracy speedometers are affordable and widely available. On the contrary, longitudinal accelerometers are notoriously unreliable and only used for purposes extraneous to navigation, such as the triggering of airbags in a collision event.} and on the measurements (\ref{z}), {\em locally} available to the $k$--th agent  (acquirable  for instance via onboard  LIDAR sensors\footnote{For automotive applications, commercially available affordable and high accuracy ``dot''  LIDARs  have latencies well under 1$\mu$s. Given the typical speeds of road vehicles, this implies that a numerical differentiation of the interspacing distance $z_k(t)$ in order to obtain the relative speed $z_k^v(t)$  is feasible via a high sampling frequency.}). Thus, the  control law  at the $k$--th agent reads:

\[
 u_k=u_{k-1}+u_k^\ell.
 \]

 In Figure~\ref{f2}, we denoted with $K_k$ the input--output operator from $z_k, z_k^v$ and $v_k$ respectively to $u_k^\ell$ of the $k$-th sub-controller from (\ref{FigureHelper}), namely

\begin{equation} \label{FigureHelperbis}
u_{k}^\ell = K_k \star \big ( z_k,\:  z_k^v, \: v_k\big).
\end{equation}
The resulted control architecture for any two consecutive agents ($k \geq 2$) can be pictured as in Figure~\ref{f2}.  For all practical purposes, the existence of a time delay on each of the feedforward links $u_k$, with $1 \leq k \leq (n - 1)$ must be taken into account.  For readability, these time delays are figuratively denoted  by $e^{-\theta s}$ in Figure~\ref{f2} (the Laplace transform of a delay of $\theta$ seconds), representative to the situation in which the delayed version $u_k(t-\theta)$ version of the $u_k(t)$ signal is received on board of the $(k+1)$ agent. In applications, these delays are caused by the physical limitations of the wireless communications system used for the implementation of the feedforward link , entailing a $\theta$ time delay at the receiver. For automotive applications the standard digital radio  communications systems (included in Figure~\ref{f2}) are DSRC\footnote{$\quad$ {\em IEEE 802.11p} - Dedicated Short Range Communications}.


\begin{rem}
Without the assumption of inter-agent communications delays, one might argue that information from the leader propagates instantaneously to all the agents in formation, via a relay mechanism (from each agent to its successor) and consequently the resulted distributed scheme doesn't employ local, but rather global information from the leader. It is known that precisely this type of time delays can  drastically alter the performance control architectures based on  such relay schemes \cite{rick}.
\end{rem}

\begin{rem}\label{DelSimpl}
 For the illustrative simplicity of the exposition, we look first at  the scenario in which there are no time--delays induce by the inter-agent (wireless) communication of information, such as the predecessor's control signal $u_{k-1}$. A  ``synchronization'' mechanism that can cope with the time-varying communications induced time--delays will be addressed in Section~\ref{delaycompensation}.
 \end{rem}


\subsection{A First Glance at the Closed--Loop Dynamics Decoupling}

 The control policy (\ref{oursbis}) entails a highly beneficial ``decoupling'' feature of the closed--loop dynamics at each agent, as  illustrated next. Firstly, note that by plugging (\ref{oursbis}) into (\ref{primo})  we obtain the following closed--loop error equations at the $k$--th agent:

\begin{equation} \label{secundo}
\dot{{z}}_k^v={{ f}_{k-1}({ v}_{k-1})- f_{k-1}}({ v}_k) - {\beta_k} {z}_k^v + \nabla_{{y}_k} V_{k,k-1}(\| z_k \|_\sigma).
\end{equation}

The following result will be instrumental in the sequel. Consider  the following Lyapunov candidate functions:

\begin{equation} \label{Lk}
\begin{split}
 L_k\big(z_k(t),z_k^v(t) \big)  \overset{def}{=} \frac{1}{2}  \Big(
V_{k,k-1}( \| {{ z}}_k(t) \|_\sigma) + \quad \quad \quad \quad \quad \quad \quad  \\
 \quad \quad \quad \quad \quad \quad \quad + {z}_k^v {}^\top(t) {z}_k^v(t) \Big),\ \text{with} \; 1 \leq k \leq n.
\end{split}
\end{equation}

\begin{lem}\label{instrumental-lemma}
The derivative of the Lyapunov candidate function $L_k(\cdot,\cdot)$ introduced in \eqref{Lk} along the trajectories of (\ref{primo}) and (\ref{oursbis}) is given by

\begin{equation} \label{Ldot22}
\begin{split}
\frac{d}{dt}L_k(z_k(t),z_k^v(t))&= {z}_k^v {}^\top(t)  \Big( f_{k-1}\big(v_{k-1}(t)\big)-f_{k-1}\big(v_k(t)\big) \Big) \\
&\quad \quad \quad \quad \quad \quad \quad \quad  - {\beta_k} {z}_k^v {}^\top(t) {z}_k^v(t) \: ,
\end{split}
\end{equation}
and does not depend on the choice of the  APFs $V_{k,k-1}(\cdot)$.
\end{lem}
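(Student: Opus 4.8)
The plan is to differentiate $L_k$ along the closed loop by the chain rule, organizing the computation so that the artificial--potential contributions cancel by construction rather than through any special property of the particular APF. Concretely, I would split $L_k$ from \eqref{Lk} into a ``potential'' part $\tfrac12 V_{k,k-1}(\|y_k-y_{k-1}\|_\sigma)$ and a ``kinetic'' part $\tfrac12 z_k^v{}^\top z_k^v$ and differentiate each separately along the trajectories of \eqref{dacia}--\eqref{ours}.

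For the kinetic part, $\tfrac{d}{dt}\big(\tfrac12 z_k^v{}^\top z_k^v\big)=z_k^v{}^\top\dot z_k^v$; substituting the closed--loop error equation \eqref{secundo} yields
\[
z_k^v{}^\top\dot z_k^v = z_k^v{}^\top\big(f(v_{k-1})-f(v_k)\big) - \beta_k\, z_k^v{}^\top z_k^v - z_k^v{}^\top\nabla_{y_k}V_{k,k-1}(\|z_k\|_\sigma),
\]
so the lemma reduces to showing that the time derivative of the potential part exactly accounts for the last, APF--dependent, term.

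For the potential part I would apply the chain rule to $t\mapsto V_{k,k-1}(\|z_k(t)\|_\sigma)$, using three ingredients already available: the identity $\dot z_k=z_k^v$ recorded right after \eqref{z}; the fact that the $\sigma$--norm depends on its argument only through $\|\cdot\|_2$ and is differentiable everywhere, so that $\|y_k-y_{k-1}\|_\sigma=\|z_k\|_\sigma$ with $\nabla_{z_k}\|z_k\|_\sigma=z_k/\sqrt{1+\|z_k\|_2^2}$; and the relation $z_k=y_{k-1}-y_k$, which converts $\nabla_{z_k}$ into $-\nabla_{y_k}$. This identifies $\tfrac{d}{dt}V_{k,k-1}(\|z_k\|_\sigma)$ with a multiple of $z_k^v{}^\top\nabla_{y_k}V_{k,k-1}(\|z_k\|_\sigma)$; adding it to the kinetic contribution cancels the gradient term and leaves exactly \eqref{Ldot22}. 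The last assertion is then immediate: $V_{k,k-1}$ entered the calculation only through a pair of gradient terms that annihilate each other for \emph{any} differentiable potential, so no trace of the APF survives in $\tfrac{d}{dt}L_k$.

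The step that needs the most care is purely the sign and coefficient bookkeeping in this last cancellation: keeping straight the flip between $\nabla_{y_k}$ and $\nabla_{z_k}$ caused by $z_k=y_{k-1}-y_k$, differentiating the composition $V_{k,k-1}\circ\|\cdot\|_\sigma$ through the derivative of the $\sigma$--norm, and checking that the $\tfrac12$ multiplying $V_{k,k-1}$ in \eqref{Lk} is exactly what is required for the gradient contributions to cancel rather than leave a residual multiple of $z_k^v{}^\top\nabla_{y_k}V_{k,k-1}(\|z_k\|_\sigma)$. Everything else is a routine substitution.
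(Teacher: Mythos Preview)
Your proposal is correct and follows essentially the same route as the paper: differentiate the potential and kinetic parts separately, substitute the closed--loop error equation \eqref{secundo} into $z_k^{v\top}\dot z_k^v$, and observe that the APF gradient contributions cancel. The only cosmetic difference is that the paper obtains the needed sign relation by invoking the ``anti--symmetrical property'' $\nabla_{y_k}V_{k,k-1}=-\nabla_{y_{k-1}}V_{k,k-1}$ of APFs from \cite{SCL2012}, whereas you reach the equivalent conclusion directly from $z_k=y_{k-1}-y_k$ (so $\nabla_{y_k}=-\nabla_{z_k}$); your version is a bit more self--contained, and your explicit flag about the $\tfrac12$ and the $\nabla_{y_k}/\nabla_{z_k}$ sign flip is well placed, since that bookkeeping is exactly where the paper's intermediate display \eqref{namol} is loosest.
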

\begin{IEEEproof}
Differentiating the APF $V_{k,k-1}(\cdot)$ at the $k$--th agent  with respect to time, yields

\begin{equation} \begin{split}
&\frac{d}{dt} V_{k,k-1}( \| { y}_{k-1} - { y}_{k} \|_\sigma) = {(\dot{{ y}}_{k-1}-\dot{{ y}}_{k})}^\top \times \\
& \big(\nabla_{{ y}_{k-1}} V_{k,k-1}( \| {y}_{k-1} -{y}_k \|_\sigma)-\nabla_{{ y}_{k}} V_{k,k-1}( \| {y}_{k-1} - {y}_k \|_\sigma) \big)
\end{split}
\end{equation}
and by employing the  anti--symmetrical property of APFs \cite[pp. 197]{SCL2012} :  $\nabla_{{ y}_k} V_{k,k-1}( \| {z}_k  \|_\sigma) =-\nabla_{{y}_{k-1}} V_{k,k-1}(\| {z}_k \|_\sigma)$  we get that

\begin{equation} \label{namol}
\frac{d}{dt}V_{k,k-1}( \| {z}_k \|_\sigma) = -2\: {\dot{{ z}}_k} {}^\top \: \nabla_{{ y}_{k}} V_{k,k-1}( \| {z}_k \|_\sigma).
\end{equation}

Therefore from (\ref{Lk}) it follows that

\begin{equation*}
\begin{split}
\frac{d}{dt}L_k\big(z_k(t),z_k^v(t) \big) &={z}_k^v {}^\top \dot{{z}}_k^v -{z}_k^v {}^\top \nabla_{{y}_k} V_{k,k-1}( \| {z}_k  \|_\sigma)  \\  &\hspace{-1.2cm}=  {z}_k^v {}^\top  \big(\dot{{z}}_k^v- \nabla_{{y}_k} V_{k,k-1}( \| {z}_k  \|_\sigma)  \big) \\
&\hspace{-1.2cm}\overset{ (\ref{secundo})}{=}  {z}_k^v {}^\top  \big(  { f}_{k-1}({ v}_{k-1})-{ f}_{k-1}({ v}_k)) - {\beta_k} {z}_k^v \big) \\
&\hspace{-1.2cm} =  {z}_k^v {}^\top  \big(  { f}_{k-1}({ v}_{k-1})-{ f}_{k-1}({ v}_k) \big) -  {\beta_k} {z}_k^v {}^\top {z}_k^v
\end{split}
\end{equation*}

\end{IEEEproof}


\section{Decoupling control design} \label{MR}


The following result is the main result of this Section, as it delineates a  ``decoupling'' property of the closed--loop dynamics, achieved by the  type (\ref{oursbis}) control policy along with: {\em (i)} closed-loop stability, {\em (ii)} velocity matching, {\em (iii)} collision avoidance and {\em (iv)} formation topology preservation. Specifically, assuming that the acceleration of the leader vehicle becomes zero after a finite period of time ({\em i.e.} $v_0(t)$ reaches a steady-state) then the following theorem holds:

\begin{theorem} \label{Nostra} If the functions $f_k(\cdot)$ with $0\leq k \leq n$ from (\ref{nlin}) satisfy the Lipshitz--like condition  \cite[Assumption~1]{SCL2012}

\begin{equation} \label{Lipsha}
\left|(v_2-v_1)^\top \big( f_k(v_2)-f_k(v_1)\big)\right| \leq \alpha_k \|v_2-v_1 \|_2^2, \quad \forall \:v_1,\: v_2
\end{equation}
then for any of the type \eqref{oursbis} control laws, such that $\beta_k>\alpha_{k-1}$,  the following hold: 

\noindent {\bf (A)} 
Given the Lyapunov function $L_k(\cdot, \cdot)$  from (\ref{Lk}), {\em local} to the $k$-th agent,
then for any real constant $c>0$ the sub--level sets $\Omega_c^k\overset{def}{=} \{ (z_k, z_k^v) | L_k (z_k, z_k^v) \leq c \}$ of $L_k(\cdot, \cdot)$ are compact and they represent forward invariant sets for the {\em local}  closed--loop dynamics (\ref{secundo}) of the $k$--th agent. \\
\noindent {\bf (B)} The control laws  \eqref{oursbis} guarantee  velocity matching in the steady-state {\em i.e.} $\displaystyle \lim_{t \rightarrow \infty} \|z_k^v(t)\| = 0$  and collision avoidance in the transient regime, {\em i.e.} there exists $\eta_c>0$ such that

$$\| z_k(t) \|_2>\eta_c, \; \forall t \geq0 .$$

\noindent {\bf (C)} The controller \eqref{ours} guarantees the formation's topology preservation in the steady-state, {\em i.e.}

$$\lim_{t \rightarrow \infty}\| z_k(t) \|_2=\delta_k$$
where $\delta_k$ is a pre-specified real, positive value.

\end{theorem}
\begin{IEEEproof}
 {\bf (A)}  We show that for any real $c>0$ the {\em local} sub--level sets $\Omega_c^k\overset{def}{=} \{ (z_k, z_k^v) |\ L_k (z_k, z_k^v) \leq c \}$ of $L_k(\cdot, \cdot)$ are compact. Note that $L_k(z_k, z_k^v)<c$ implies that $\|{z}_k^v \|< 2c$ and $V_{k,k-1}( \| z_k \|_\sigma)<2c$. Since $V_{k,k-1}(\cdot)$ is radially unbounded this implies that $ \| z_k \|_\sigma$ is bounded and consequently $ \| z_k \|_2$ is bounded. Therefore $\Omega_c^k\subset\R^{2 \text{dim} (y_k)}$ is a bounded set\footnote{Here $\text{dim}(y_k)$ denotes the dimension of the $y_k(t)$ vector valued function of time, which in general may be greater than one.}. Moreover due to the continuity of $ \| \cdot \|_\sigma$ and of $L_k(\cdot)$, one obtains that $\Omega_c^k$ is a closed set. Precisely $\Omega_c^k$ is the pre-image of a closed set through a continuous function. In the Banach space $\R^{2 \text{dim} (y_k)}$ it therefore holds that $\Omega_c^k$ is closed and bounded thus $\Omega_c^k$ is compact. Furthermore, Lemma \ref{instrumental-lemma} and the Lipschitz--like assumption (\ref{Lipsha}) on all $f_k(\cdot)$ implies that

\[
\frac{d}{dt}L_k\big(z_k(t),z_k^v(t)\big)\leq (\alpha_{k-1}-\beta_k){z}_k^v {}^\top(t) {z}_k^v(t),\ \forall k
\]
along the trajectories of (\ref{secundo}). Therefore it suffices to choose  the controller gain $\beta_k>\alpha_{k-1}$ in order to guarantee that along the trajectories of (\ref{secundo}) it holds that $\displaystyle\frac{d}{dt}L_k\big(z_k(t),z_k^v(t)\big)<0$  and also that $\displaystyle \Omega_c^k$ is a forward invariant set for the ``decoupled'' closed--loop system \eqref{secundo}, local to the $k$-th agent.\\

{\bf (B)} From the properties of the APF (Definition~\ref{apf}) it follows that  $V_{k,k-1}( \| z_k \|_\sigma)\rightarrow \infty$ as $ \| z_k \|_2\rightarrow 0$, {\em i.e.} $\forall c>0, \ \exists \eta_c>0$ such that

\begin{equation}\label{collision_avoidance}
V_{k,k-1}( \| z_k \|_\sigma)>c, \ \forall \ \| z_k \|_2<\eta_c.
\end{equation}
Let $c_k=\min_{r\ge 0}V_{k,k-1}(r)>0$. 
It follows from \eqref{collision_avoidance} that for any positive $c>c_k$ one has that

\begin{equation} \label{2collision_avoidance}
V_{k,k-1}( \| z_k \|_\sigma)\le c \mbox{ implies } \| z_k \|_2\geq\eta_c.
\end{equation}
Note that an increase of $c$ is correlated with a corresponding decrease of $\eta_c$.  Next, let us fix $c=2L_k(z_k(0),z_k^v(0))$. From point {\bf (A)} above it follows that for $\beta_k>\alpha_{k-1}$ it holds that $\Omega_c^k$ is a forward invariant set with respect to  \eqref{secundo} and consequently $L_k(z_k(t),z_k^v(t))\leq \frac{c}{2},\ \forall t\geq0$. This implies via (\ref{Lk}) that $V_{k,k-1}( \| z_k(t) \|_\sigma)<c,\ \forall t\geq0$ which in turn yields $c>c_k$ and so from \eqref{collision_avoidance} we conclude that

\[\| z_k \|_2>\eta_c,\ \forall t\geq0.\]
It is noteworthy that $\eta_c$ is implicitly defined by $c$ which in turn depends on the initial conditions $(z_k(0),z_k^v(0))$.

{\bf (C)}  Given $L_k(\cdot,\cdot)$ as introduced in (\ref{Lk}), it is claimed that the string formation's steady--state configuration is attained at the minimum of the following {\em formation-level Lyapunov function}, defined as

\begin{equation} \label{L}
L(z(t),z^v(t))\overset{def}{=} \frac{1}{2}  \sum_{k=1}^{n} L_k(z_k(t),z_k^v(t))
\end{equation}
where $z(t),z^v(t)$ are the aggregated vectors of the regulated measurements  for the entire formation, obtained by adequately stacking the local measurements $z_k(t),z_k^v(t)$ of the  agents:
\begin{equation} \label{zzz}
z(t) \overset{def}{=}\ba{cccc} z_1(t) & z_2(t) & \dots & z_{n}(t) \ea^\top
\end{equation}
\begin{equation} \label{zzzv}
z^v(t) \overset{def}{=}\ba{cccc} z_1^v(t) & z_2^v(t) & \dots & z_{n}^v(t) \ea^\top
\end{equation}
The minimum of (\ref{L}) therefore coincides (component--wise) with the minima of the Lyapunov functions (\ref{Lk})  {\em local} to the $k$-th agent.

In order to prove the claim, first note that the level sets of $L(\cdot,\cdot)$ given by $\Omega_c\overset{def}{=} \{ (z, z^v) | L (z, z^v) \leq c, \; \text{with} \: c>0 \}$ are compact, since $\Omega_c$ is a finite cartesian product of the $\Omega_c^k$  sets, whose compactness was proved at point {\bf (B)} above.  By a similar argument, it follows that $\Omega_c$ represents a forward invariant set for the closed--loop dynamics of the entire formation, along the trajectories of (\ref{secundo}), where $1 \leq k \leq n$.

Note that from the definition of (\ref{L}) and Lemma \ref{instrumental-lemma} that along the trajectories of (\ref{dacia}) and (\ref{oursbis}) one has that
\begin{equation} \label{SumLdot22}
\begin{split}
\frac{d}{dt}L\big( z(t), z^v(t)\big) =   \quad \quad \quad \quad    \quad \quad \quad \quad  \quad \quad \quad \quad  \quad \quad \quad \quad \\
  \quad \quad \quad  \sum_{k=1}^{N} {z}_k^v {}^\top  \Big( { f}_{k-1}({ v}_k) - { f}_{k-1}(({ v}_{k-1}) \Big) -  \sum_{k=1}^{N} {\beta_k} {z}_k^v {}^\top {z}_k^v \: .
\end{split}
\end{equation}
Therefore, by employing LaSalle's invariance principle we conclude that  the Lyapunov function $L_k(z_k(t), z_k^v(t))$ converges asymptotically to its minimum {\it i.e.} $\frac{d}{dt}L_k\big(z_k(t), z_k^v(t)\big)=0$, which is attained at velocity matching as $z_k^v=0$  (or equivalently when $v_{k-1}=v_k$) . Denote with $(\delta_k,0)$ the point at which such minimum is attained  and note that $\delta_k >0$ by the collision avoidance attribute from point {\bf (B)}. Consequently, in the steady-state $\|z_k^v(t)\|_2$ converges to $0$,  while $\|z_k(t)\|_2$ converges to $\delta_k$ and the formation's topology preservation
$$\lim_{t \rightarrow \infty}\| z_k(t) \|_2=\delta_k, \quad 1\leq k \leq n,$$
is guaranteed.
\end{IEEEproof}

\begin{rem}
The pre-specified positive constants $\delta_k$ (with $1 \leq k \leq n$) representing the desired steady-state  inter-agent distances, can be integrated in the Lyapunov functions $L_k(\cdot, \cdot)$  at the controller design stage,  as exemplified in Section~\ref{ANE}.
\end{rem}

\subsection{Communicating the Control Signals of the sub-Controllers}

When looking at the problem of agents moving in the string formation, the intuition behind the type \eqref{oursbis} control law is that it essentially includes an implementation of the common ``break lamp bulb'' regulated in all road traffic. This is the conceptual difference of the proposed  distributed architecture: instead of choosing to communicate the regulated measurements between sub-controllers (as done in  the cooperative output regulation setup such as \cite{3ZECISITREI, 3ZECISIPATRU, 3ZECISIUNU, 3ZECISIDOI}) the sub-controllers choose to transmit their control actions. This feature turns out to be essential in achieving the synchronization objective along with all the other standard performance requirements ({\em e.g.} collision avoidance, topology preservation) from displacement-based formation control \cite[Section~6]{formsurv}, as outlined by the following result.

\begin{prop}\label{NonInvariance}
 If the predecessor's control  action  $u_{k-1}(t)$ is not made available at the $k$-th agent in the controller equation (\ref{oursbis}) then velocity matching cannot be attained.
 \end{prop}
 \begin{IEEEproof}
 We prove the result by contradiction. Let us assume that velocity matching is achieved. Consequently, in the steady-state $\|{z}_k^v(t)\|_2$ converges to zero and $\|{z}_k(t)\|_\sigma$ converges to some constant $\eta\geq\delta_k$. Since $V_{k,k-1}(\cdot)$ is a class $C^1$ function, all its partial derivatives are continuous and so $\nabla_{{y}_k} V_{k,k-1}(\| z_k \|_\sigma)$ converges to a vector

 $$\nabla_\eta \overset{def}{=} \nabla_{{y}_k} V_{k,k-1}(\| z_k \|_\sigma)\big|_{\|{z}_k(t)\|_\sigma=\eta}$$
 Whenever $\|{z}_k^v(t)\|_2$ converges to zero, it follows that

\begin{equation}\label{eq:n1}
\forall c>0,\ \exists \ t_0>0 \mbox{ such that } \|{z}_k^v(t)\|_2\le c, \ \forall t\ge t_0.
 \end{equation}
 In order to show that \eqref{eq:n1} is violated, let $c>0$ be fixed and $t_0=\max\{t>0\mid \|{z}_k^v(t)\|_2= c\}$. Therefore, we should have $\|{z}_k^v(t)\|_2< c,\ \forall t\ge t_0$.\\[2mm]
Since $\nabla_{{y}_k} V_{k,k-1}(\| z_k \|_\sigma)$ converges to $\nabla_\eta$ one has that $\exists \epsilon>0$ such that

\[
\big|{z}_k^v {}^\top(t)\big(\nabla_{{y}_k} V_{k,k-1}(\| z_k \|_\sigma)-\nabla_\eta\big)\big|<\epsilon c
\]
Employing \eqref{Lipsha} (while noting  that $u_{k-1}$ is implicitly present in the closed-loop equations \eqref{secundo}), one obtains:

\begin{equation*} \label{LdotPerturbed}
\begin{split}
&\frac{d}{dt} \|{z}_k^v(t)\|_2= {z}_k^v {}^\top(t)  \Big( f_{k-1}\big(v_{k-1}(t)\big)-f_{k-1}\big(v_k(t)\big) \Big) \\
&- {\beta_k} {z}_k^v {}^\top(t) {z}_k^v(t)+  {z}_k^v {}^\top(t)\nabla_{{y}_k} V_{k,k-1}(\| z_k \|_\sigma)\\
&+ {z}_k^v {}^\top(t)u_{k-1}(t)\: \\
&\geq -(\alpha_{k-1}+\beta_k) {z}_k^v {}^\top(t) {z}_k^v(t)-\epsilon c+ {z}_k^v {}^\top(t)\big(\nabla_\eta+u_{k-1}(t)\big)\:.
\end{split}
\end{equation*}
Therefore, while

\begin{equation}\label{destabilize u}
u_{k-1}(t)=(\beta_k+\alpha_{k-1}+2\epsilon){z}_k^v(t)-\nabla_\eta
\end{equation}
it follows that $\displaystyle\frac{d}{dt} \|{z}_k^v(t_0)\|_2>0$ which in turn implies that \eqref{eq:n1} doesn't hold. It is important to note here that no increase in the control  effort by choosing a larger gain $\beta_k$ can help in achieving velocity matching, since for arbitrarily  large but finite values of $\beta_k$, under the assumption that $\|{z}_k^v(t)\|_2$ converges to zero, one gets that the term $(\beta_k+\alpha_{k-1}+2\epsilon){z}_k^v(t)$ becomes arbitrarily small. Nevertheless, once this term becomes small, there always exists a  bounded contro input $u_{k-1}$ of the form \eqref{destabilize u} that precludes $\|{z}_k^v(t)\|_2$ from converging to zero.
\end{IEEEproof}

 \subsection{Some Considerations and Future Work} \label{CDCA}
  For illustrative simplicity, we provide below an informal outlook of our approach leading to the proposed distributed controller architecture for the string network. We look at how the aggregated {\em variables for the entire formation} relate to the  agents' {\em individual variables} (the states - including the absolute coordinates $y_k$ and the speeds $v_k$, the sub-controller's outputs  $u_k$  and  the measurements $z_k, z_k^v$ local to the $k$-th agent) . Let us define:

\begin{equation} \label{v}
v(t) \overset{def}{=}\ba{cccc} v_0(t) & v_1(t) & \dots & v_n(t) \ea^\top
\end{equation}
\begin{equation} \label{u}
u(t) \overset{def}{=}\ba{cccc} u_0(t) & u_1(t) & \dots & u_n(t) \ea^\top
\end{equation}
\begin{equation} \label{f}
f\big(v(t) \big) \overset{def}{=}\ba{cccc} f_0(u_0) & f_1(u_1) & \dots & f_n(u_n) \ea^\top
\end{equation}
while the vectors of the aggregated measurements $(z,z^v)$ have been introduced in (\ref{zzz}), (\ref{zzzv}). The aggregated equations (\ref{nlin}) for the entire formation become

\begin{equation} \label{nlinbis}
\dfrac{d}{dt}{{ v}}(t)=f({ v}(t)) + { u}(t)
\end{equation}

The objective of the control scheme is related to the regulated measurements (in our case $z^v$), which (in general) are functions of the agent's states (in our case $v$). Let us describe this dependence by

\begin{equation} \label{J}
  z^v= \mathcal{J}(v)
\end{equation}
In many situations of practical interest the $\mathcal{J}(\cdot)$ operator may be linear.  For the situation studied here, it follows from (\ref{z}) that $\mathcal{J}(\cdot)$ is an  $n \times (n+1)$ real matrix having entries equal to ``$1$'' on its diagonal and ``$-1$'' on its supra-diagonal. In displacement-based formation control the type (\ref{J}) definition of the regulated variables ({\em e.g.} relative distances or relative speeds) encapsulates the topology of the multi-agent formation. In related formulations ({\em e.g.} the optimal control formulation for LTI agents from \cite{TAC2016}) the $\mathcal{J}(\cdot)$ operator may also enclose norm based costs.

Let us assume next that $\mathcal{J}(\cdot)$  commutes with the differentiating operator  $\displaystyle d/dt$ and apply $\mathcal{J}(\cdot)$ to both sides of (\ref{nlinbis}), while taking into account its linearity and the definition of the regulated variables (\ref{J}) in order to get
\begin{equation} \label{tertz}
\dfrac{d}{dt}\mathcal{J}(v(t))=\mathcal{J} \Big( f({ v}(t)) \Big) + \mathcal{J} \Big( { u}(t) \Big)
\end{equation}

The merit of the type (\ref{tertz}) formulation is that it is expressed directly in terms of the regulated measurements $z^v= \mathcal{J}(v)$ (on the left-hand side) and not of the original states $v$ from (\ref{nlinbis}), while the dynamics of  ``the plant'' have changed from $f(\cdot)$ to $\mathcal{J}(f(\cdot))$. The morale is that the distributed controller design may now be performed on (\ref{tertz}) in order to obtain the control laws $\mathcal{J}(u)$. This is exactly the approach taken for the string formation, where for the regulated measurements $z_k^v=v_{k-1}-v_k$ we have essentially performed a decentralized controller synthesis (\ref{ours}) for the control signals $(u_{k-1}-u_k)$. The relation from (\ref{tertz}) also suggests that the sub-controller communications topology should ``borrow'' the formation's topology. This entails  the ``dynamic decoupling'' attained by the string formation,  the existence of  ``local'' forward invariant sets and Lyapunov functions and the possibility of performing independent regulation and sub-controller design at each agent.

Finding suitable distributed controllers for more general types of  $\mathcal{J}(\cdot)$ operators is the objective of future investigations. Assuming that the distributed controller design for (\ref{tertz}) successfully yields the control laws $\mathcal{J}(u)$, there still remains the problem of finding a {\em causal implementation} of such a controller\footnote{This aspect will appear even more more clearly in those formulations of these type of problems involving dynamical systems with discrete-time.} in terms of the  $u$ signals of the original formulation (\ref{nlinbis}). This may require certain invertibility assumptions on the $\mathcal{J}(\cdot)$ mapping. Furthermore, looking at any practical implementation of the proposed law from (\ref{oursbis}), the local control $u_k(t)$ should only depend on the delayed version of $u_{k-1}(t)$ (or on its history). These important issues will be discussed in Section~\ref{delaycompensation} below.

\subsection{Platooning Control}
Platooning control has been a longstanding problem in control engineering, encompassing a vast literature. For a series of recent, interesting results also providing a good  outline of existing literature we refer to \cite{GaborI, GaborII}. To the best of the authors's knowledge, the current results are the only ones  guaranteeing collision avoidance and topology preservation for heterogenous, nonlinear dynamical agents in the presence of communication induced time-delays, as outlinen in Section~\ref{delaycompensation} below. In this context, the current paper can be looked at as an extension to nonlinear control of the novel ideas for $\mathcal{H}_2 / \mathcal{H}_\infty$ control of LTI agents in \cite{TAC2016}.

\section{A  Practical Time-Delays Compensation Mechanism}\label{delaycompensation}

The difficulties caused  for networked systems by the communications induced delays and time jittering have been a topic of intensive study for decades. In formation control practical applications it has been argued in \cite{rick} that the (relatively low latency) time delays induced by the wireless communications of the control signals $u_k$ from one agent to its successor in the string (even if assumed time-invariant and homogeneous\footnote {For digital radio wireless systems such as WiFi, Bluetooth or Zigbee, the corresponding time-delays have low latencies but they are time-varying, taking values around a nominal delay of about 20 ms.}) irremediably alter the performance of the control scheme. What happens is that the delays propagate through the closed-loops  towards the back of the platoon where they ``accumulate'' in a manner depending on the number of vehicles in formation \cite{rick}.

For the case of LTI  dynamical agents, the very recent results from  \cite{TAC2016} provide a functional solution for compensating the effect of the communications delays  by introducing (with high accuracy) supplemental delays at key points in the closed loop. However, the aforementioned method \cite[Section~VI]{TAC2016} of essentially ``moving'' the  {\em synchronization delay} through the loop and ultimately incorporating it in the model of the ``plant'' cannot be directly adapted to nonlinear dynamical systems.  In this section we show how an adaptation of  the distributed controller of Section~\ref{MR}  is able to compensate the communications delays, while essentially preserving all the performance features from the delay free case. The approach taken here is based on: {\em (i)} the tailored use of the so-called {\em time-headways} without sacrificing the tightness of the formation  and {\em (ii)} changing the definition of regulated measurements to a meaningful approximation. The main challenge here is for the re-defined regulated measurements to remain measurable (on board of each agent) in a distributed manner.

\subsection{Adapting Time-Headways for Delays Compensation}

Classical results in platooning control \cite{TH1, TH2, TH3} proved that a considerable improvement of performance can be obtained by adequately modifying the regulated interspacing distance  (for each vehicle $k$)  $z_k=y_{k-1}-y_k$ such as to include a factor $- h \dot{y}_k(t)$  proportional with the speed of the current vehicle. The resulted interspacing policies (dubbed time-headways)   become $z_k=y_{k-1}(t)-y_k(t) - h\dot{y}_k(t)$ (where $h>0$, the so called time-headway, is a real, positive constant) and provide a spacing  in time rather than distance  (between two consecutive vehicles). Up until the recent distributed scheme introduced in \cite{TAC2016} -  in the case of LTI dynamical agents - good attenuation at all frequencies could only be achieved via the use of time headways policies \cite{TAC2010}. The generally adopted value for highway platooning (which became the standard at some point)   is $h =1$ second. The main drawback of such large time headways is that  they destroy the tightness of the formation,  drastically reducing the highway traffic throughput or any potential fuel savings achievable by the air drag reduction.

We introduce next a novel method for delays compensation that combines the synchronized-clocks mechanism from \cite[Section~VI]{TAC2016} with an adequate adaptation of the time headways. Firstly, let us revamp as follows the definitions  (\ref{z}) of the interspacing distances $z_k$ and of the regulated relative speeds $z_k^v$ respectively at the $k$-th vehicle:

\begin{equation} \label{znew}
\begin{split}
&{{ \tilde z}}_k(t)\overset{def}{=}{y}_{k-1}(t-\theta)-{y}_k(t-\theta) - \theta \dot{y}_k(t-\theta), \\
&{{ \tilde z}}_k^v(t)\overset{def}{=}{ v}_{k-1}(t-\theta)-{ v}_k(t-\theta) - \theta \dot{v}_k(t-\theta),
 1 \leq k \leq n, \\
\end{split}
\end{equation}

\noindent where the positive constant $\theta > 0$, {\em i.e.} \underline{the time-headway, will}  \underline{be taken to be equal with the communications delay} and will be considered (without any loss of generality) to be the same for all vehicles in formation\footnote{See also Remark~\ref{wifi} below}. It can be seen that the signals defined in (\ref{znew}) are  merely $\theta$ delayed version of (\ref{z}) with a $\theta$ time-headway.

If at the current moment in time $t$, we would choose to regulate instead the measurements taken at moment $(t-\theta)$ according to (\ref{znew}), that would be a limitation imposed by the communications delay (which are relatively very small, though) and it would entail some loss in performance which was to be expected. The inclusion of the $\theta$ time headway results in a slightly more conservative policy, since it induces slightly larger  interspacing distances as the speed increases. The same conservative effect (of the   $\theta$ time headway)  occurs with respect to the regulated  relative speeds ${{ \tilde z}}_k^v(t)$ during the transient regime when the acceleration $\dot{v}_k$ is sizable.

\begin{rem} \label{wifi}
For all practical applications related to platooning, the value of $\theta$ will be taken to be equal to a worst case scenario value of the latency of the wireless communication systems, which is  about $2\times 10^{-2}$ seconds for digital radio systems such as DSRC, WiFi, Bluetooth or Zigbee. Furthermore, the ``synchronized clocks''  mechanism introduced in \cite[Section~VI]{TAC2016} used in conjunction with time stamping protocols (at the transmission of the predecessor's control signal $u_{k-1}$) is able to emulate and implement time invariant and heterogeneous communications time delays throughout the entire formation, by introducing with high accuracy supplemental delays in the closed-loop.
\end{rem}

\begin{rem}\label{equinox}
The effect of the time-headway on the behavior of the formation is directly proportional with the numerical value of $\theta$, which is very small in practice\footnote{See Remark~\ref{wifi} above.}. In fact, for formation control practical applications the effect is almost negligible given the order of magnitude of $\theta$ compared to the time constants of the dynamics of road or aerial agents.
\end{rem}

 \begin{figure*}[!ht]
\hspace{-3mm}
\centering
\includegraphics[scale=0.85]{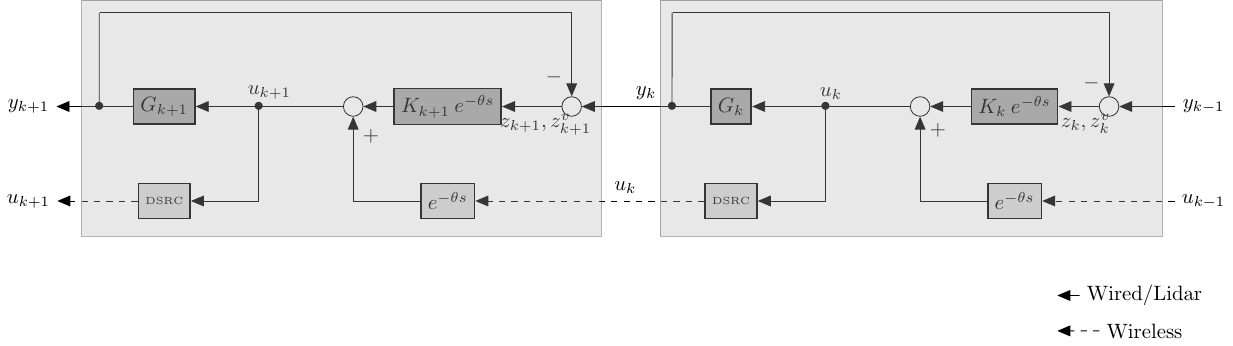}
\caption{Distributed Controller Implementation with Delays Compensation.} 
\label{f3}
\end{figure*}

\subsection{Changing the Regulated Measurements}\label{wind}

In the next proposition we re-define the regulated measurements and make the express remark that the definition included below will be enforced onward, throughout the contents of the current Section~\ref{delaycompensation}.

\begin{prop}\label{taitai}
We assume the following initial conditions
\[
y_k(t)= -\sum_{j=0}^k \ell_j, \quad v_k(t)=0,\ \forall t\in(-\theta,0],
\]
and  further define the regulated measurements to be:
\begin{equation} \label{z-delayed}
\begin{split}
\Aboxed{&{{ z}}_k(t)\overset{def}{=}{y}_{k-1}(t-\theta)-{y}_k(t)} \\
\Aboxed{&{{ z}}_k^v(t)\overset{def}{=}{ v}_{k-1}(t-\theta)-{ v}_k(t) \quad \text{for} \quad 1 \leq k \leq n.}
\end{split}
\end{equation}
It follows that ${{ z}}_k(t), {{ z}}_k^v(t)$ from (\ref{z-delayed}) above represent an  $\mathcal{O} (\theta^2)$ approximation of ${{ \tilde z}}_k(t)$ and ${{ \tilde z}}_k^v(t)$ respectively, from (\ref{znew}).
\end{prop}
\begin{IEEEproof}
It follows by the very definition of (\ref{z-delayed}) and (\ref{znew}) respectively and the following  Taylor series expansion:
\begin{equation} \label{approx}
\begin{split}
&{y}_k(t)={y}_k(t-\theta) + \theta \dot{y}_k(t-\theta) + \mathcal{O} (\theta^2),\\
&{v}_k(t)={v}_k(t-\theta) + \theta \dot{v}_k(t-\theta) + \mathcal{O} (\theta^2).
\end{split}
\end{equation}
\end{IEEEproof}

\begin{rem}\label{measure}
One essential practical issue is to establish if the new regulated signals (\ref{z-delayed}) remain measurable on board of the $k$-th agent (in a distributed manner). Writing in (\ref{z-delayed})   the Taylor series expansion with an integral rest for $y_k(t)$ and $v_k(t)$ respectively, we obtain the following equivalent expressions:
\begin{equation} \label{z-delayed-equiv}
\begin{split}
&{{ z}}_k(t)\overset{}{=}{y}_{k-1}(t-\theta)-{y}_k(t-\theta)-\int_{t-\theta}^t\dot{y}_k(\tau) d \tau, \\
&{{ z}}_k^v(t)\overset{}{=}{ v}_{k-1}(t-\theta)-{ v}_k(t-\theta)-\int_{t-\theta}^t\dot{v}_k(\tau) d \tau
\end{split}
\end{equation}
or equivalently
\begin{equation} \label{z-delayed-equiv-bis}
\begin{split}
&{{ z}}_k(t)\overset{}{=}{z}_{k}(t-\theta)-\int_{t-\theta}^t v_k(\tau) d \tau, \\
&{{ z}}_k^v(t)\overset{}{=}{ z_k^v}(t-\theta)-v_k(t) + v_k(t-\theta)
\end{split}
\end{equation}
and so it becomes apparent that the signals introduced in (\ref{z-delayed}) can be measured   on board the $k$-th vehicle via (\ref{z-delayed-equiv-bis}), using only onboard ranging sensors\footnote{Preferably very low latency LIDAR sensors, which are already affordable and widely available commercially} and high accuracy longitudinal speedometers in conjunction with a mere integrator. Specifically, the first term in (\ref{z-delayed-equiv-bis}) consists of the $\theta$-delayed measurement of the interspacing distance minus the integration of the absolute speed (measurable on board) over a $\theta$-length interval. The second term in (\ref{z-delayed-equiv-bis}) consists of the $\theta$-delayed measurement of the relative speed\footnote{The relative speed with respect to the preceding vehicle, which is also measurable onboard, see also footnote number 9 on page four.}  minus the $(v_k(t)-v_k(t-\theta))$ term,  comprised of absolute speeds measurable onboard the $k$ agent. Finally, the entire history on the interval $[ (t-\theta), \; t]$  of the ranging sensors (\ref{z-delayed-equiv-bis})  must be stored in a memory buffer, in order to be used by the distributed controller we will introduce next.
\end{rem}

The following remark represents the conclusion of the current Subsection~\ref{wind}:
\begin{rem}
Given the values of $\theta$ that appear in practice (see Remark~\ref{wifi}) and given the worst case scenario of breaking decelerations $| \dot{v}_k(t) |$ that could occur during highway traffic, it follows via Proposition~\ref{taitai} that the signals from (\ref{z-delayed}) are such an accurate approximation of  (\ref{z-delayed-equiv-bis}),  that the order of the approximation falls way below the tolerated measurement errors of the most performant ranging sensors. That is to say that choosing between two controllers that  regulate either the (\ref{znew}) signals or the (\ref{z-delayed-equiv-bis}) signals respectively, has considerably less influence on the resulted scheme than the measurement noise of an highly accurate LIDAR. Consequently, in the  scheme proposed next, we  choose to regulate  (\ref{z-delayed-equiv-bis}).
\end{rem}

\subsection{A Controller to Cope with  Time Delays}
Considering the definition of $z_k$ and $z_k^v$ as in (\ref{z-delayed}), we will prove that the distributed control policies given next  are able to entirely compensate  the communication induced delays:

\begin{equation} \label{ours-del}
\left\{\begin{split}
{ u}_k(t) = & { u}_{k-1}(t-\theta) + {\beta_k} {{ z}}_k^v(t)   + \nabla_{{y}_k} V_{k,k-1}(\| {{ z}}_k(t)\|_\sigma)\\
-&{f}_k({v}_k)+{f}_{k-1}({v}_k)\\
{ u}_k(t) = &0,\quad \forall t\in(-\theta,0]
\end{split}\right.
\end{equation}

\begin{rem} \label{implement} Note that for the real time, practical implementation of type (\ref{ours-del}) control policies onboard the $k$-th vehicle, two pieces of information are needed: {\em (i)} the  command signal  ${ u}_{k-1}(t-\theta)$ of the predecessor,  received on board via (DRSC) wireless communications with a $\theta$-delay  and {\em (ii)}  the (\ref{z-delayed-equiv-bis})  sensor measurements $z_k, z_k^v$, which are measurable on board the $k$-th agent, as per the considerations from Remark~\ref{measure}. Note that in order for the scheme to be effective,  the $\theta$ communications delay from ${ u}_{k-1}(t-\theta)$ must be replicated with high accuracy  in the in $z_k(t), z_k^v(t)$ measurements from  (\ref{z-delayed-equiv-bis}). This resembles the (GPS time-base) synchronization mechanism in \cite[Section~VI]{TAC2016}) for the LTI case and it can be implemented using time-stamping protocols of the involved signals ${ u}_{k-1}$ and $z_k(t), z_k^v(t)$. In Figure~\ref{f3}, such $\theta$ ``synchronization'' delays to be imposed on the $z_k(t), z_k^v(t)$ signals from  (\ref{z-delayed-equiv-bis}) have been figuratively incorporated in the controller, via the $e^{-\theta s}$ term. The aforementioned synchronization will   ensure \underline{time invariant, point-wise delays of value exactly $\theta$}, \underline{homogeneously  throughout the entire formation} as per the considerations from Remark~\ref{wifi}.
\end{rem}

We employ the Lyapunov function from \eqref{Lk} keeping in mind that the definitions of ${{ z}}_k(t),{{ z}}_k^v(t)$ are in accordance to (\ref{z-delayed}). Specifically, assuming that the acceleration of the leader vehicle becomes zero after a finite period of time ({\em i.e.} $v_0(t)$ reaches a steady-state), the time delays adaptation for the main result of Section \ref{MR} reads:
\begin{theorem} \label{Nostra-del} If the function $f(\cdot)$ from (\ref{nlin}) satisfies the global Lipshitz--like condition \eqref{Lipsha} then for any of the type \eqref{ours-del} control laws, such that $\beta_k>\alpha_{k-1}$  the following hold: \\
\noindent {\bf (A)} The derivative of the Lyapunov candidate function $L_k(\cdot,\cdot)$ from \eqref{Lk}, local to the $k$-th agent, along the trajectories of (\ref{primo}) and (\ref{ours-del}) is given by

\begin{equation} \label{Ldot22new}
\begin{split}
&\frac{d}{dt}L_k(z_k(t),z_k^v(t))= - {\beta_k} {z}_k^v {}^\top(t) {z}_k^v(t)+\\
&\qquad {z}_k^v {}^\top(t)  \Big( f_{k-1}\big(v_{k-1}(t-\theta)\big)-f_{k-1}\big(v_k(t)\big) \Big) \: ,
\end{split}
\end{equation}
and does not depend on the choice of the  APFs $V_{k,k-1}(\cdot)$. \\
\noindent {\bf (B)}
Given the Lyapunov function $L_k(\cdot,\cdot)$ from (\ref{Lk}),
{\em local} to the $k$-th agent,  the sub--level sets $\Omega_c^k\overset{def}{=} \{ (z_k, z_k^v) | L_k \leq c, \; \text{with} \: c>0 \}$ of $L_k$ are compact and they represent forward invariant sets for the {\em local}  closed--loop dynamics of the $k$--th vehicle. \\
\noindent {\bf (C)} The control laws  \eqref{ours-del} guarantee  velocity matching in the steady-state {\em i.e.} $\displaystyle \lim_{t \rightarrow \infty} \|z_k^v(t)\| = 0$  and collision avoidance in the transient regime, {\em i.e.} there exists $\eta_c>0$ such that

$$\| z_k(t) \|_2>\eta_c, \; \forall t \geq0 .$$
The controller \eqref{ours-del} also guarantees the formation's topology preservation in the steady-state, {\em i.e.}

$$\lim_{t \rightarrow \infty}\| z_k(t) \|_2=\delta_k$$
where $\delta_k$ is a pre-specified real, positive value.

\end{theorem}
\begin{IEEEproof} {\bf (A)}
With the controller (\ref{ours-del}) at hand we obtain the following closed--loop  equations at the $k$--th agent:
\begin{equation} \label{secundo-del}
\begin{split}
\dot{{z}}_k^v(t)&={{ f}_{k-1}({ v}_{k-1}(t-\theta))- f_{k-1}}({ v}_k(t))\\ &- {\beta_k} {z}_k^v(t) + \nabla_{{y}_k} V_{k,k-1}(\| z_k (t)\|_\sigma).
\end{split}
\end{equation}
Let us notice that we deal with one fixed delay. Therefore, its derivative is 0 and consequently no complexity is added to the computations with respect to the proof of  Lemma \ref{instrumental-lemma}. One straightforwardly gets that

\[
\frac{d}{dt}V_{k,k-1}( \| {z}_k \|_\sigma) =- 2\: {\dot{{ z}}_k} {}^\top \: \nabla_{{ y}_{k}} V_{k,k-1}( \| {z}_k \|_\sigma)
\]
with ${z}_k$ as defined in \eqref{z-delayed}. Since $\dot{{ z}}_k={z}_k^v$ it follows that

\begin{equation*}
\begin{split}
&\frac{d}{dt}L_k(z_k(t),z_k^v(t))= - {\beta_k} {z}_k^v {}^\top(t) {z}_k^v(t)+\\ &\qquad {z}_k^v {}^\top(t)  \Big( f_{k-1}\big(v_{k-1}(t-\theta)\big)-f_{k-1}\big(v_k(t)\big) \Big) \: ,
\end{split}
\end{equation*}
{\bf (B)} Notice that since $f_{k-1}(\cdot)$ satisfies \eqref{Lipsha} it follows that
 \[\begin{split}
 {z}_k^v {}^\top(t)  \Big( f_{k-1}\big(v_{k-1}(t-\theta)\big)-f_{k-1}\big(v_k(t)\big) \Big)\leq\\
 \alpha_{k-1} {z}_k^v {}^\top(t) {z}_k^v(t)
\end{split} \]
Consequently, along the trajectories of the closed-loop system \eqref{secundo-del} one has

\[
\frac{d}{dt}L_k(z_k(t),z_k^v(t))\leq (\alpha_{k-1}-\beta_k){z}_k^v {}^\top(t) {z}_k^v(t).
\]
Choosing $\beta_k>\alpha$ we guarantee that $\frac{d}{dt}L_k(z_k(t),z_k^v(t))<0$ along the trajectory of \eqref{secundo-del}. Moreover, it follows along the lines of the proof of Theorem \ref{Nostra} that $\Omega_c^k$ is compact and forward invariant.\\
{\bf (C)} Along the lines of the proofs of points {\bf (B)} and {\bf (C)} of Theorem \ref{Nostra} we can show that there exists $\eta>0$ such that

$$\| {z}_k\|_2>\eta,\quad\forall t\geq0$$
and the steady state value is given by

$$\lim_{t \rightarrow \infty}\| {z}_k(t) \|_2=\delta_k.$$
In order to obtain the desired results we have just to notice that for all $k$ the function ${y}_{k-1}(t)$ is non-decreasing in time. Consequently,

\[
\| {y}_k(t) - {y}_{k-1}(t) \|_2\geq \| {y}_k(t) - {y}_{k-1}(t-\theta) \|_2>\eta, \ \forall t>0
\]
and
\[
\lim_{t \rightarrow \infty}\| {y}_k(t) - {y}_{k-1}(t) \|_2\geq\lim_{t \rightarrow \infty}\| {z}_k(t) \|_2=\delta_k
\]

\end{IEEEproof}
\begin{rem}\label{fine} The scheme proposed above is able to regulate ${ v}_{k-1}(t-\theta)-{ v}_k(t)$ in the presence of communications delays. Therefore, as far as the leader's velocity profile is  slowly varying relatively to the order of magnitude of the communications delays, the scheme does regulate an accurate approximation of ${ v}_{k-1}(t)-{ v}_k(t)$. Nevertheless,  oscillations of the leader's velocity at a frequency that is of the same order of magnitude with  $1/\theta$, cannot be  efficiently compensated and the accordion effect will appear. These assumptions are very well satisfied in the platooning setting, but they may not be valid for other applications. The conclusion is in line with the well known fact that for the validity of the control scheme it is always necessary that the time delays that propagate through the controller are smaller than those propagating through the given plant.
\end{rem}

\section{A Numerical Example and Further Considerations} \label{ANE}

In this section we illustrate the distributed controllers introduced above for dynamical agents (\ref{dacia}) where the function $f_k(\cdot)$ has a quadratic form $f_k(v)= -\gamma_k g -\ell_k v^2$, in accordance with the dynamical model of road vehicles from \cite[(1)/pp. 1]{Gabor}. Here, $g=9.81\: m/s^2$ is the gravitational acceleration, $\gamma_k$ is the tyre rolling resistance coefficient and $\ell_k$ the air drag constant of vehicle $k$. The dynamics (\ref{dacia}) are

\begin{subequations}\label{daciaLucian}
\begin{equation} \label{daciaLucianA}
\dot{y}_k= v_k
\end{equation}
\begin{equation} \label{daciaLucianB}
\dot{v}_k= -\gamma_k g -\ell_k v_k^2+\frac{\eta}{R} w_k
\end{equation}
\end{subequations}
with $\eta$ the gear ratio and $R$ the wheel radius. The command signal $\omega_k$ is the engine's torque, and its linear transformation $\frac{\eta}{R} \omega_k$ corresponds to the input $u_k$ in (\ref{dacia}). We take the same values for $\eta=1.8$ and $R=0.5$ across all the vehicles, to preserve the same meaning of the input $u$. Note that for bounded velocities $\vert v_1 \vert, \vert v_2 \vert\leq v_{\max}$, $f_k(\cdot)$ in (\ref{daciaLucian}) satisfies the Lipschitz--like condition \cite[Assumption~1]{SCL2012} $(v_2-v_1)^\top \big( f_k(v_2)-f_k(v_1)\big) \leq \alpha_k \vert v_2-v_1 \vert^2$, with $\alpha_k = 2 \ell_k v_{\max}$. We take $v_{\max} = 60$\,m/s (i.e.\ 216\,km/h) in our experiments.

The control law is designed using APFs (Definition~\ref{apf}) of the following form \cite[Fig.~1/ pp.197]{SCL2012}

\begin{equation}\label{apfLucian}
V_{k,k-1}(\|z_k  \|_\sigma) = \ln (\|z_k \|_\sigma)^2+\dfrac{100}{\| z_k \|_\sigma^2}
\end{equation}
and a gain $\beta = 100$ is chosen, which will be greater than $\alpha_k$ for all vehicle parameters below. 
The reference signal for the entire formation is imposed by the control of the leader vehicle, namely $w_0(t)$, which consists of three smoothed rectangular pulses between the levels 15 and 30\,Nm. There are six vehicles in total including the leader, and they start at relatively small separations of about $2$\,m, with an initial velocity of $10$\,m/s.

For our baseline experiment, we take homogeneous vehicle dynamics with $\gamma_k=0.011$ and $\ell_k=0.463$\,kg/m for all vehicles $k$, and no delay. Note that in this case the term $-f_k(v_k) + f_{k-1}(v_k)$ from (\ref{oursbis}) disappears. Figure \ref{fig:baseline}, top shows the states and controls of the vehicles using absolute values to give an idea of their true trajectories; in Figure \ref{fig:baseline}, bottom the corresponding interspacing distances $z_k$, velocity differences $z_k^v$, and relative controls $u_k^\ell$ are reported. All subsequent figures will use such relative values. The baseline results show how the controllers initially prioritize increasing interspacing distances, after which the velocities are brought together. Note that on the timeline of this experiment the interspacings have not yet converged; by allowing the experiment to run longer we have confirmed that the steady-state interspacings are 10.95\,m, equal to the minimum of the potential function chosen (the trajectories are not shown here since they are not much more informative than \figref{fig:baseline}).
\begin{figure}[!htb]
\centering
\includegraphics[width=\columnwidth]{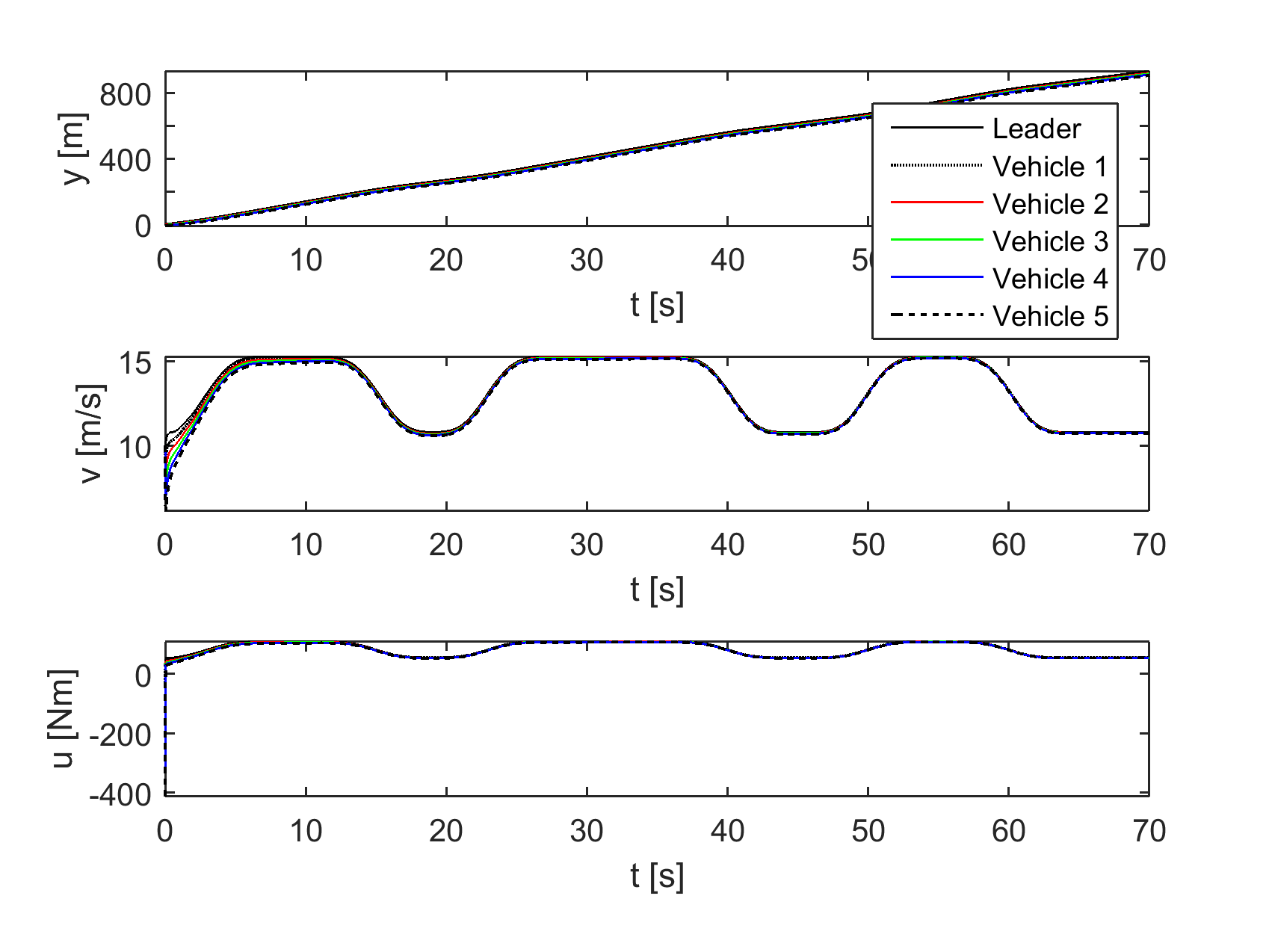}\\
\includegraphics[width=\columnwidth]{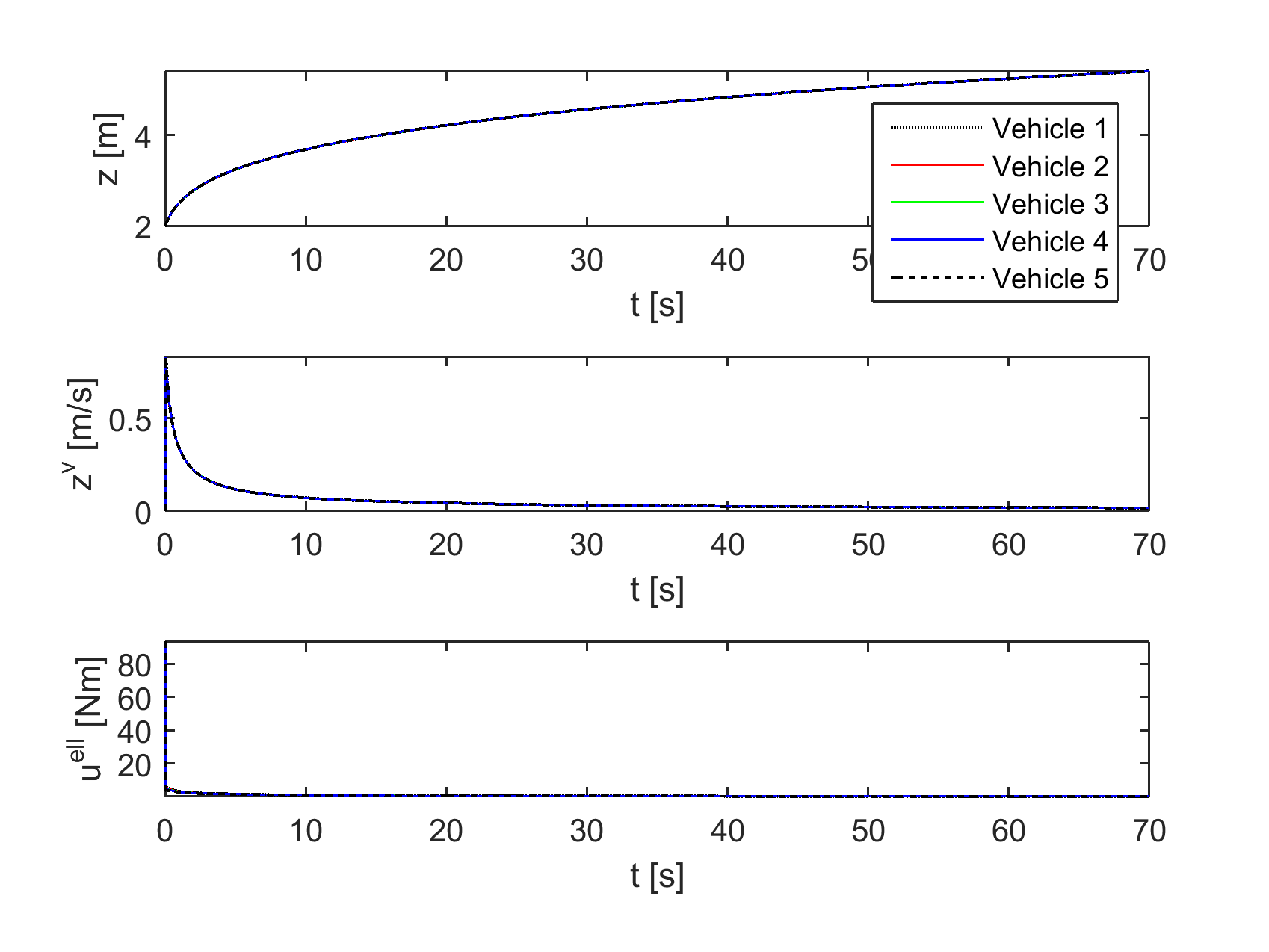}
\caption{Trajectories of the vehicles in the baseline case (homogeneous, no delay). Top: absolute values, bottom: relative inter-vehicle values.}
\label{fig:baseline}
\end{figure}

To verify Proposition \ref{NonInvariance}, the previous vehicle's input $u_{k-1}$ is removed from the control law of vehicle $k$. \figref{fig:nouprev} shows that, indeed, velocity agreement cannot be achieved, leading to divergence of the positions.
\begin{figure}[!htb]
\centering
\includegraphics[width=\columnwidth]{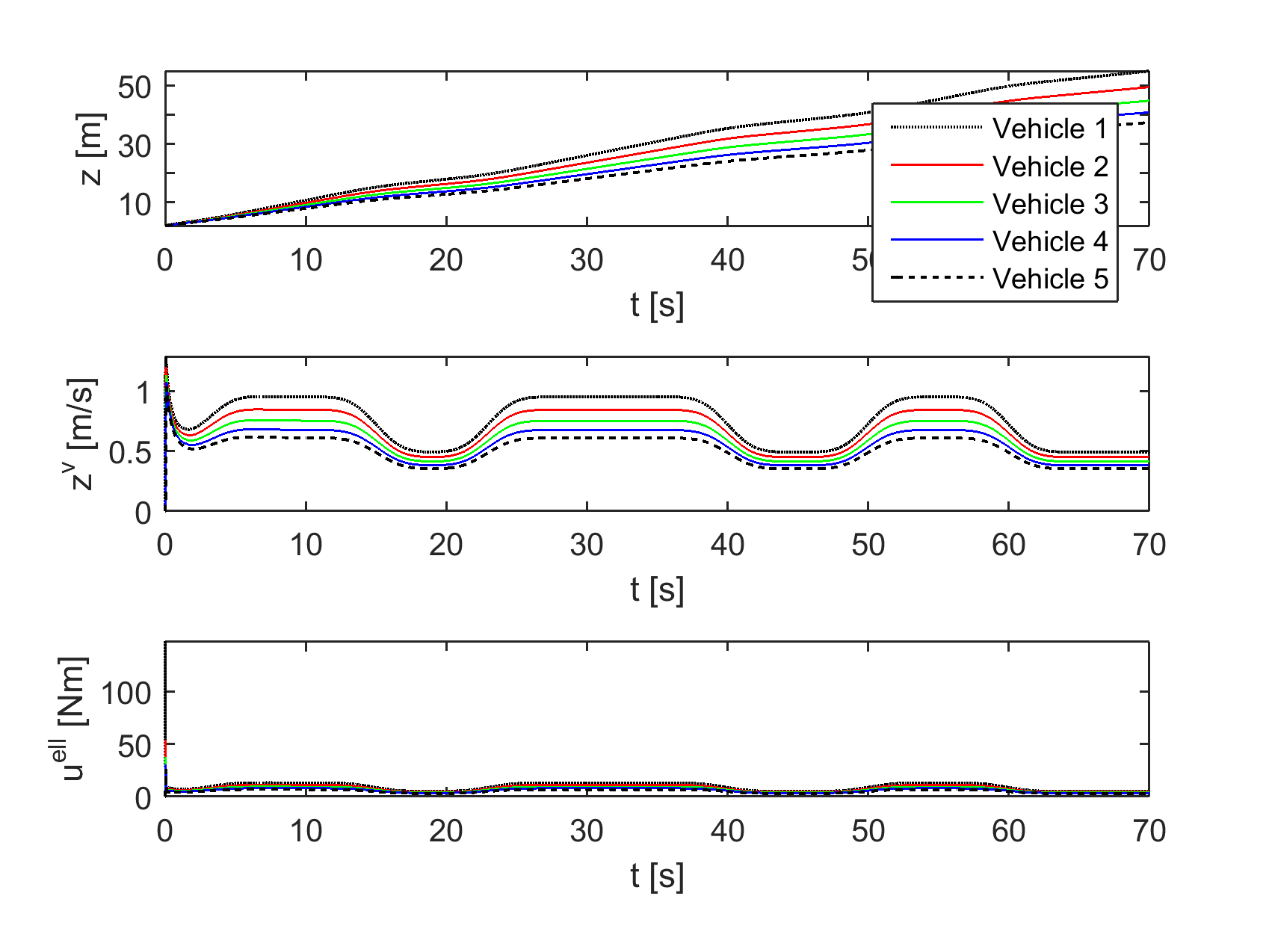}
\caption{Trajectories without using the predecessor's  control signal.}
\label{fig:nouprev}
\end{figure}

Next, we take heterogeneous vehicle dynamics: in order from the leader $k=0$ to vehicle $k=5$, $\gamma_k=0.003, 0.007, 0.011, 0.015, 0.019, 0.023$ and $\ell_k=0.3, 0.4, 0.45, 0.5, 0.6, 0.7$. First we keep the control form of the baseline experiment, without the term $-f_k(v_k) + f_{k-1}(v_k)$, to illustrate the need of compensating for vehicle heterogeneity. The results are shown in Figure \ref{fig:heterog}, top, where it is clearly seen that velocity agreement is lost in this case. If we then introduce the appropriate compensation term, the trajectories are those from Figure \ref{fig:heterog}, bottom, with nearly the same performance as in Figure \ref{fig:baseline}. Therefore, the controller efficiently compensates the heterogeneity. Note also the need to apply different control inputs to the vehicles due to their different dynamics.
\begin{figure}[!htb]
\centering
\includegraphics[width=\columnwidth]{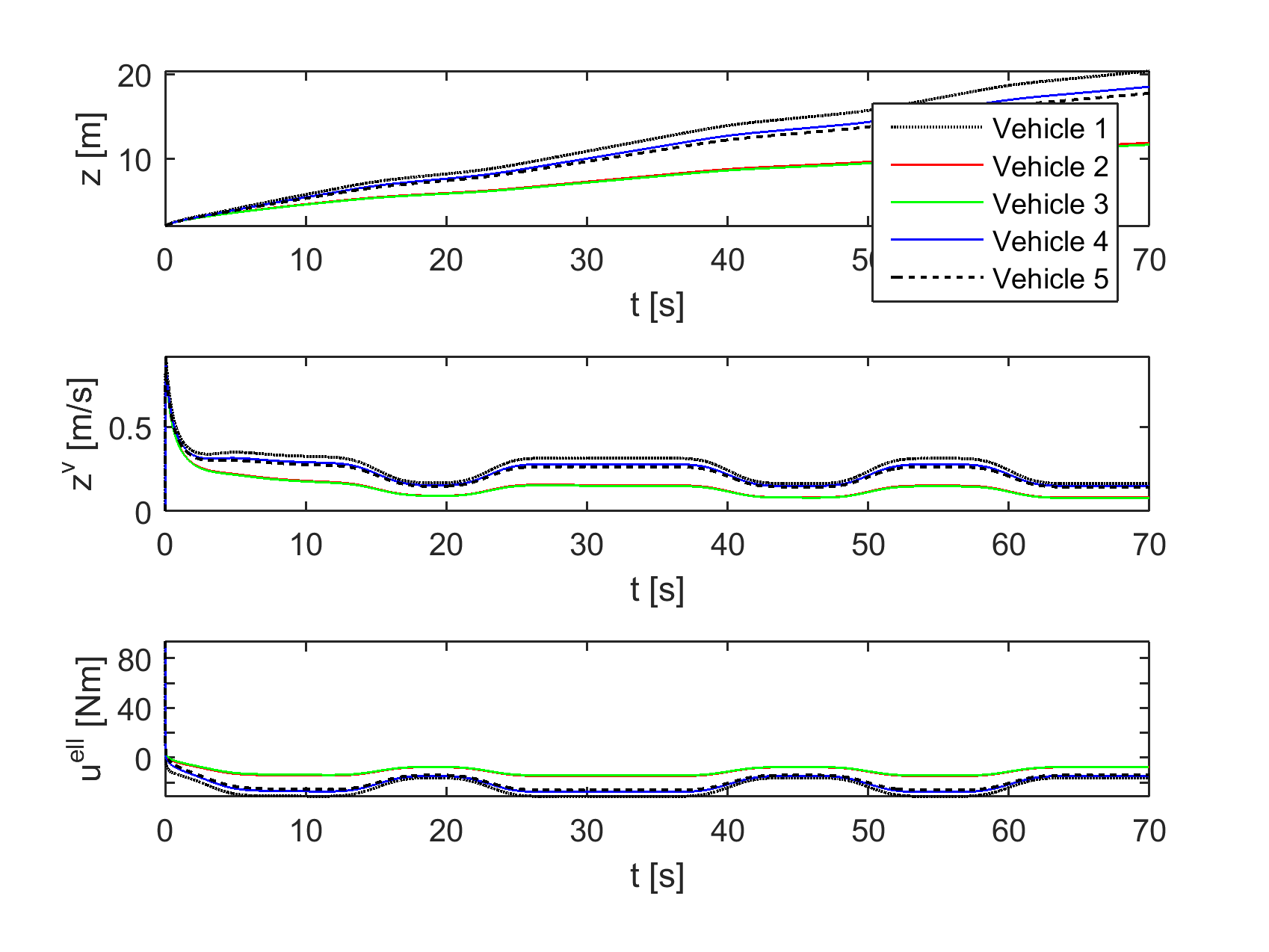}\\
\includegraphics[width=\columnwidth]{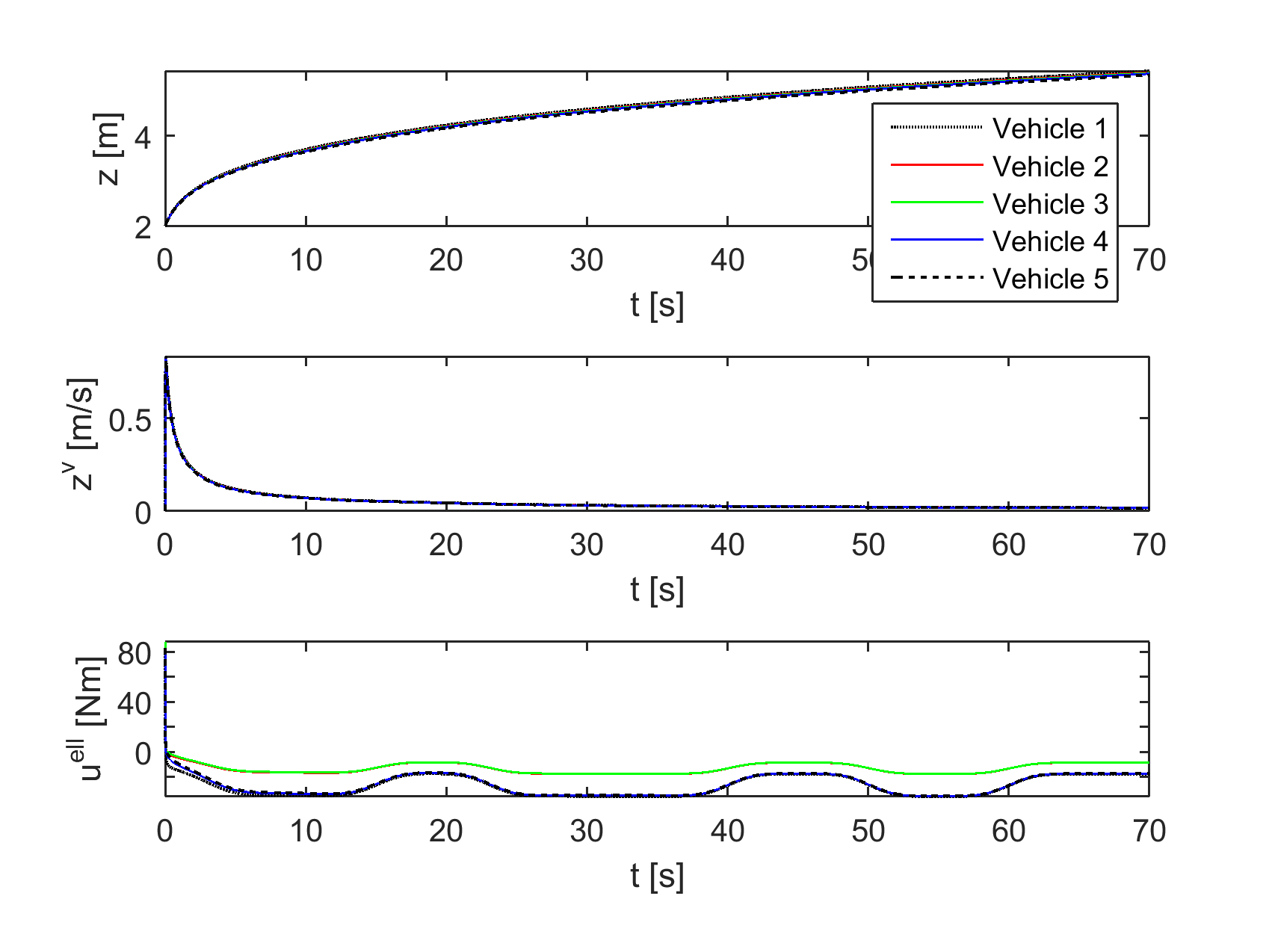}
\caption{Trajectories of the vehicles with heterogeneous dynamics. Top: uncompensated, bottom: compensated.}
\label{fig:heterog}
\end{figure}

On top of heterogeneity we now introduce a time delay of $\theta=0.2$\,s, which is quite conservative (about ten times the actual values) for digital radio communications and we apply the delay compensation mechanism from Section~\ref{delaycompensation}. The trajectories are those from Figure~\ref{fig:delay}. Note that the quantities reported are the actual instantaneous differences $z_k=y_{k-1}(t)-y_{k}(t)$, $z^v_k=v_{k-1}(t)-v_{k}(t)$, and not the actual regulated measurement (\ref{z-delayed})  from Section~\ref{delaycompensation}. 

\begin{figure}[!htb]
\centering
\includegraphics[width=\columnwidth]{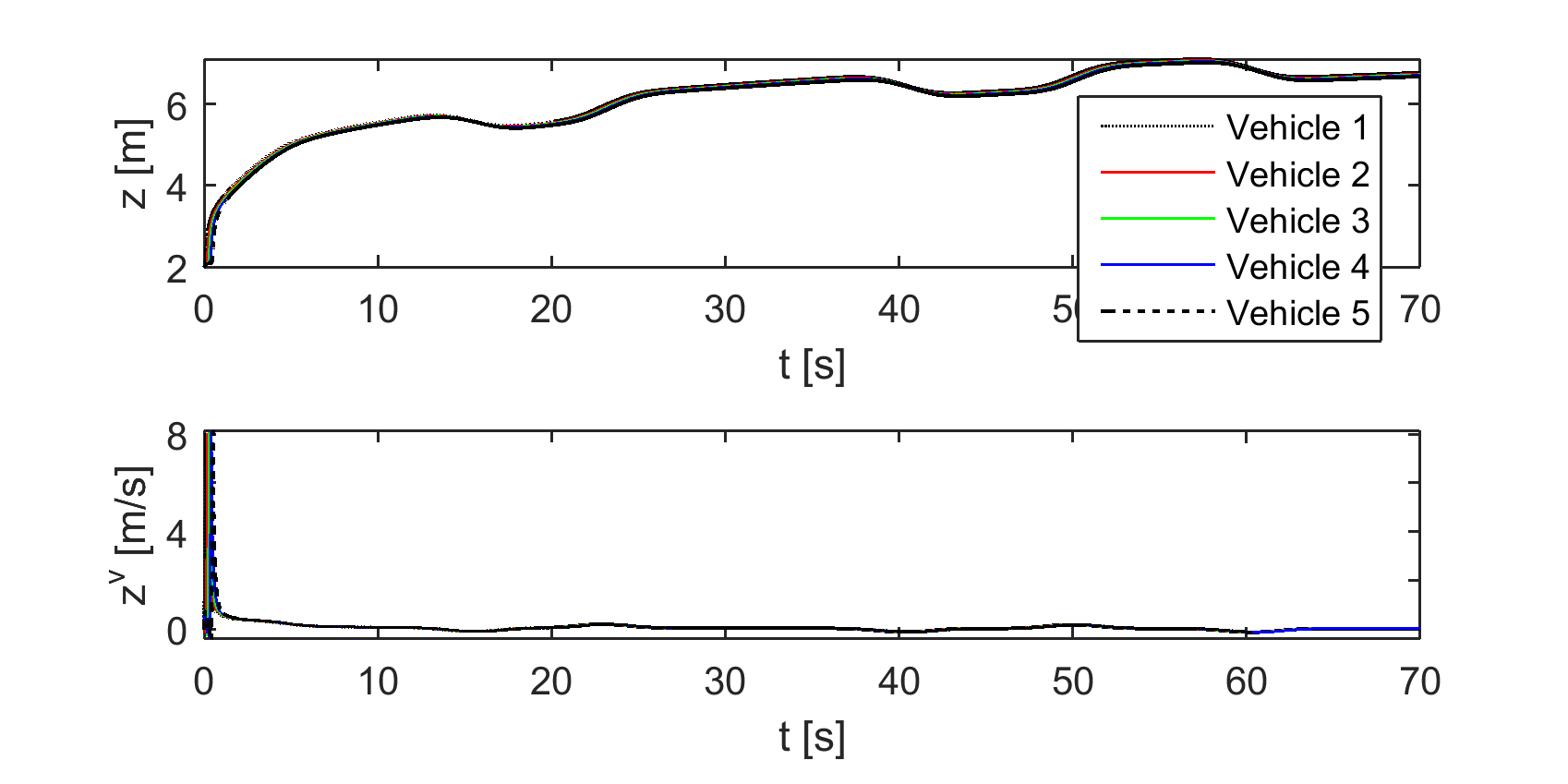}\\
\caption{Trajectories of vehicles with a relatively large $\theta=0.2$\,s time delay(control inputs are now shown in these graphs).}
\label{fig:delay}
\end{figure}

\subsection{A Heuristic for Optimal Control}

Since the stability analysis holds for any gain $\beta_k \geq \alpha_k$, a relevant practical problem is finding a good value for this gain. It turns out that this value depends on the particular objectives of the user. To illustrate, we choose several cost functions and optimize $\beta$ in the baseline experiment, for which $\forall k, \alpha_k = 55.56 =: \alpha$. Optimization is done by gridding the interval $[\alpha, 200]$ into $10$ values, setting all the follower gains to each value of $\beta$ in turn, and experimentally measuring the cost across all follower vehicles, integrated over time and averaged over the followers. The maximum of the range (here, $200$), is related to the maximum inputs that the vehicles can apply. The results are shown in \figref{fig:betaoptimization}. The first cost function penalizes velocity disagreements and the control effort: $g_1(z, z_v, u^\ell) = {(z^v)}^2 + {(u^\ell)}^2$ (we skip vehicle indices since the cost function is the same for al the vehicles). In this case, larger $\beta$ values are better, so we suggest taking the largest value that is achievable given the physical limits of the vehicles' drive train. However, if we introduce a ``safety'' premium  and add a `barrier' term $100/{\vert z \vert}$ to penalize relatively small interspacing distances, obtaining cost function $g_2$, the situation is reversed: the larger $\beta$ values focus too much on reducing velocity disagreements to the detriment of the interspacing distances, so in this case $\beta=\alpha$ works best. Depending on the particular weights, the optimal value may also be inside the interval, e.g. for $g_3(z, z_v, u^\ell) = 30/{\vert z \vert} + {(z^v)}^2 + {(u^\ell)}^2$, $\beta\approx88$ works best (keeping in mind the resolution of our grid is limited; to find a better approximation of the true optimum, a simple sample-based optimization method could be used, such as golden section rule or even binary search).

\begin{figure}[!htb]
\centering
\includegraphics[width=0.8\columnwidth]{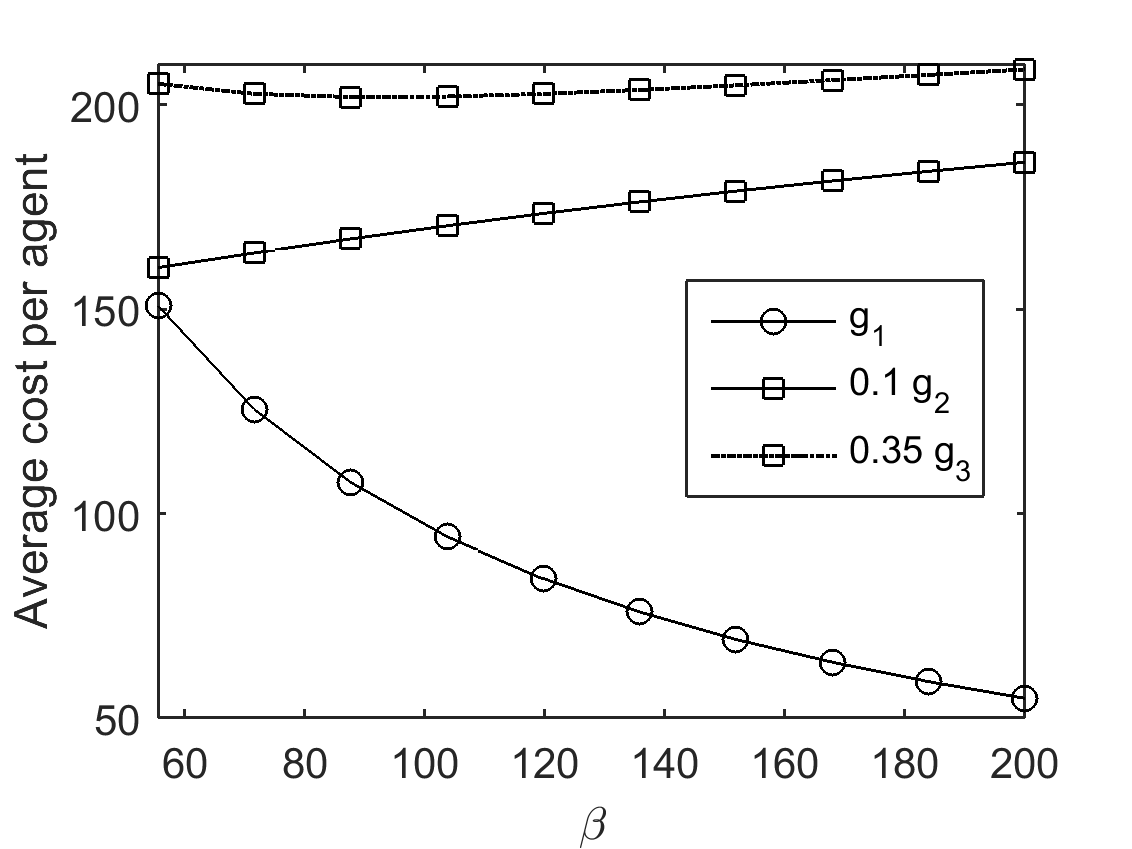}
\caption{Optimization of the gain $\beta$. To make the graph easier to read, some costs are multiplied by scaling factors, which leaves the optimal $\beta$ unchanged.}
\label{fig:betaoptimization}
\end{figure}

\section{Conclusions}

We have presented a novel method for the distributed control of  a string of heterogenous, nonlinear agents  guaranteeing collision avoidance and topology preservation in the presence of communication induced time-delays. Numerical experiments seem to suggest that the proposed scheme also benefits from a remarkable robustness to time-varying delays. The study of the root cause of this robustness is the scope of future investigations along with the study of more complicated cost functionals $\mathcal{J}(\cdot)$, introduced in Subsection~\ref{CDCA} and also the study of more general formation topologies, including those containing self-loops.

\bibliographystyle{IEEEtran}
\bibliography{plat}

\vspace{2mm}
\begin{IEEEbiography}[{\includegraphics[width=1in,height=1.25in,clip,keepaspectratio]{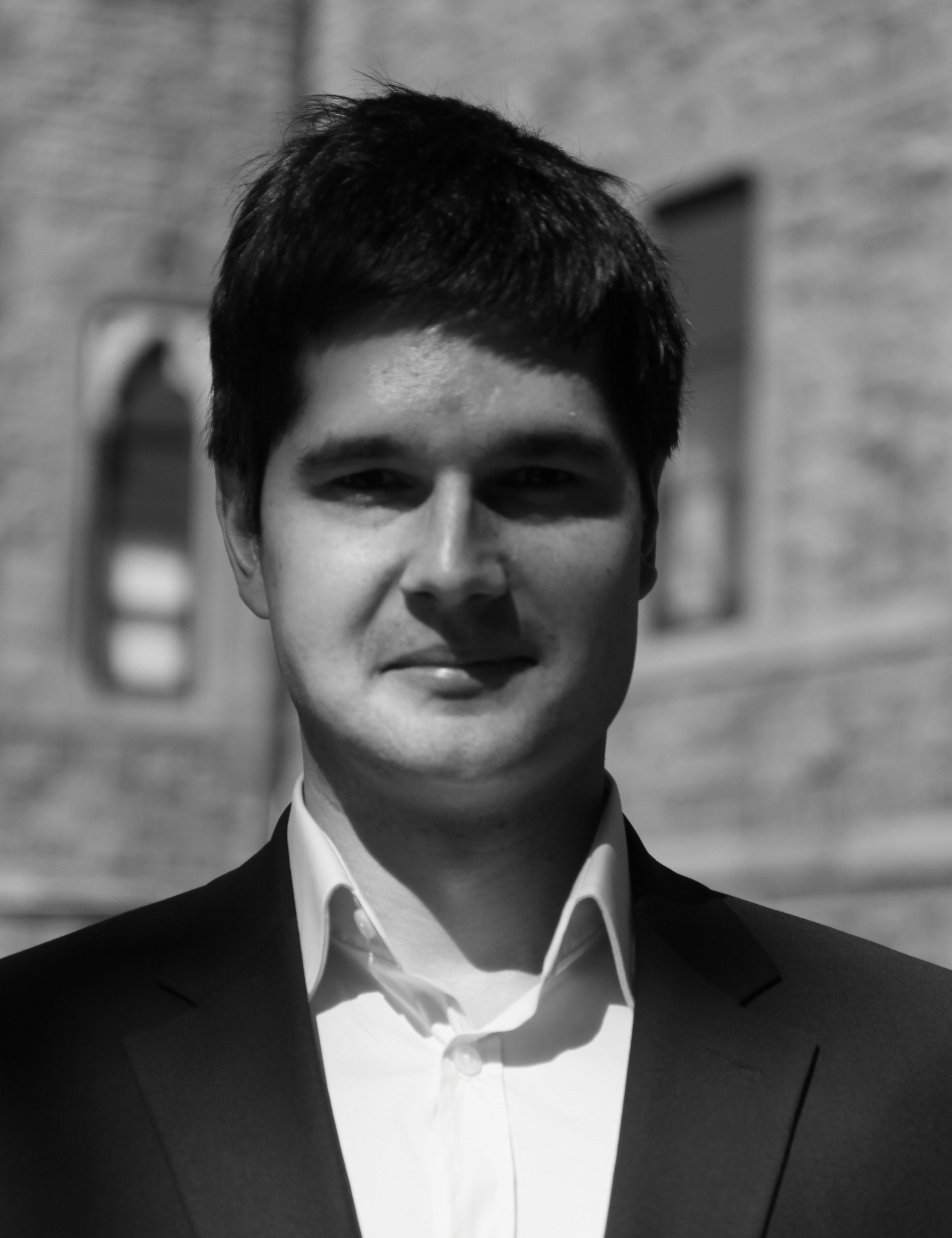}}]{\c{S}erban Sab\u{a}u}
\c{S}erban Sab\u{a}u received the M.S. degree in electrical engineering from ``Politehnica'' University Bucharest, Romania in 2002 and the Ph.D. degree in Electrical and Computer Engineering from the University of Maryland at College Park, in 2011. Before joining Stevens Institute of Technology in 2013 as an Assistant Professor, he was a postdoctoral researcher at the University of Pennsylvania. His research interests are in numerical algorithms for distributed control and distributed optimization. \end{IEEEbiography}

\begin{IEEEbiography}[{\includegraphics[width=1in,height=1.25in,clip,keepaspectratio]{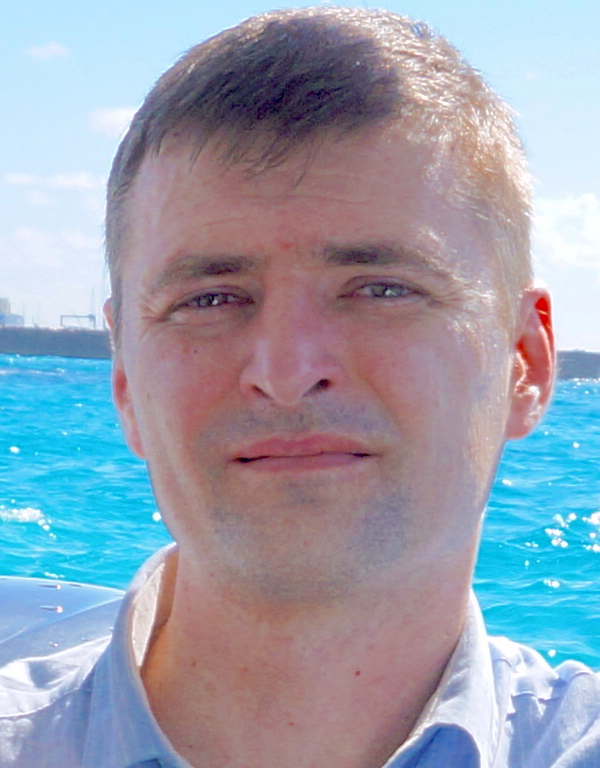}}]{Irinel-Constantin Mor\u{a}rescu}
is currently Associate Professor at Universit\'e de Lorraine and researcher at the Research Centre of Automatic Control (CRAN UMR 7039 CNRS) in Nancy, France. He received the B.S. and the M.S. degrees in Mathematics from University of Bucharest, Romania, in 1997 and 1999, respectively. In 2006 he received the Ph.D. degree in Mathematics and in Technology of Information and Systems from University of Bucharest and University of Technology of Compi\`egne, respectively. In November 2016 he received the "Habilitation \`a Diriger des Recherche" from Universit\'e de Lorraine. From March 2007 to December 2008 he was postdoctoral researcher at INRIA Grenoble, from January 2009 to December 2009 he was postdoctoral researcher at Jean Kuntzmann Laboratory and from January 2010 to October 2010 he was postdoctoral researcher at GipsaLab grenoble. His works concern stability and control of time-delay systems, stability and tracking for different classes of hybrid systems, consensus and synchronization problems.
\end{IEEEbiography}

\begin{IEEEbiography}[{\includegraphics[width=1in,height=1.25in,clip,keepaspectratio]{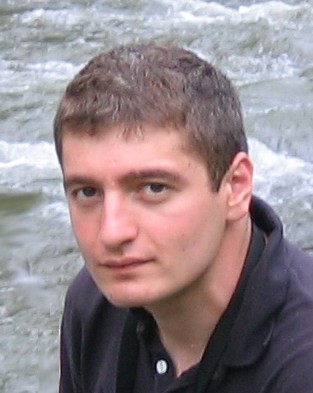}}]{Lucian Bu\c{s}oniu}
received the M.Sc. degree (valedictorian) from the Technical University of Cluj-Napoca, Romania, in 2003, and the Ph.D. degree (cum laude) from the Delft University of Technology, the Netherlands, in 2009. He is an associate professor with the Department of Automation at the Technical University of Cluj-Napoca, and has previously held research positions in the Netherlands and France. His research interests include planning, reinforcement learning, and aproximate dynamic programming for nonlinear optimal control; as well as multiagent systems and robotics. He received the 2009 Andrew P. Sage Award for the best paper in the IEEE Transactions on Systems, Man, and Cybernetics.
\end{IEEEbiography}

\begin{IEEEbiography}[{\includegraphics[width=1in,height=1.25in,clip,keepaspectratio]{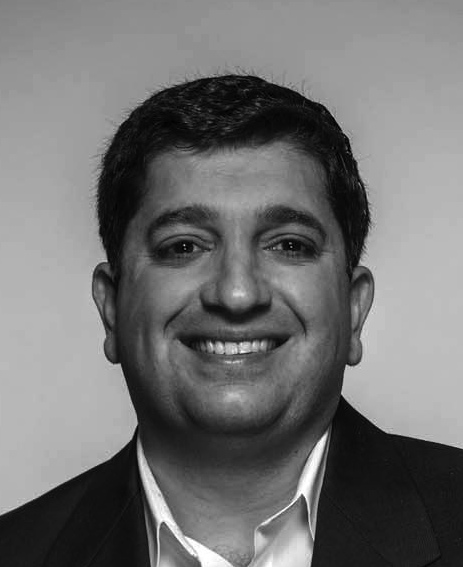}}]{Ali Jadbabaie}Ali Jadbabaie is the JR East Professor of Engineering and Associate Director of the Institute for Data, Systems and Society at MIT, where he is also on the faculty of the department of civil and environmental engineering and is a principal investigator in the Laboratory for Information and Decision Systems (LIDS). He is the director of the Sociotechnical Systems Research Center, one of MIT's 13 research laboratories and serves as the director of the Social and Engineering systems PhD Program. He received his Bachelors (with high honors) from Sharif University of Technology in Tehran, Iran, a Masters degree in electrical and computer engineering from the University of New Mexico, and his PhD in control and dynamical systems from the California Institute of Technology. He was a postdoctoral scholar at Yale University before joining the faculty at Penn in July 2002. Prior to joining MIT faculty, he was the Alfred Fitler Moore a Professor of Network Science and held secondary appointments in computer and information science and operations, information and decisions in the Wharton School. He was the inaugural editor-in-chief of IEEE Transactions on Network Science and Engineering, a new interdisciplinary journal sponsored by several IEEE societies. He is a recipient of a National Science Foundation Career Award, an Office of Naval Research Young Investigator Award, the O. Hugo Schuck Best Paper Award from the American Automatic Control Council, and the George S. Axelby Best Paper Award from the IEEE Control Systems Society. His students have been winners and finalists of student best paper awards at various ACC and CDC conferences. He is an IEEE fellow and a recipient of the 2016 Vannevar Bush Fellowship from the office of Secretary of Defense, and a member of the national Academic of Science, Engineering, and Medicine's Intelligence Science and Technology Expert Group (ISTEG). His current research interests are in distributed decision making, social learning,  multi-agent coordination and control, distributed optimization, network science, and network economics. \end{IEEEbiography}
\end{document}